\documentclass[onecolumn,12pt]{IEEEtran}
\usepackage{epsfig,psfrag}
\usepackage{amsmath}
\usepackage{amssymb}
\usepackage{here}
\usepackage{dsfont}
\usepackage{mathrsfs}

\linespread{1.3}


\newtheorem{theorem}{Theorem}
\newtheorem{corollary}{Corollary}
\newtheorem{proposition}{Proposition}
\newtheorem{lemma}{Lemma}
\newtheorem{define}{Definition}
\newtheorem{example}{Example}

\newtheorem{remark}{Remark}
\newtheorem{alg}{Algorithm}

\newtheorem{problem}{Problem}

\DeclareMathOperator{\eps}{\varepsilon}
\DeclareMathOperator*{\col}{col}
\DeclareMathOperator*{\diag}{diag}
\DeclareMathOperator*{\esssup}{ess\ sup}
\DeclareMathOperator{\deltab}{\boldsymbol\delta}
\DeclareMathOperator{\Sigmab}{\boldsymbol\Sigma}
%



\title{Robust stability and stabilization of uncertain linear positive systems via Integral Linear Constraints:\\ $L_1$- and $ L_\infty$-gains characterization}

\author{Corentin Briat\thanks{The author was previously with the ACCESS Linnaeus Centre, KTH, Stockholm, Sweden. He is now with the Swiss Federal Institute of Technology--Zurich (ETHZ), Department of Biosystems Science and Engineering (D-BSSE), Mattenstrasse 26, 4058 Basel, Switzerland; email: corentin@briat.info, corentin.briat@bsse.ethz.ch; url: http://www.briat.info}}


\begin{document}





\maketitle

\begin{abstract}
Copositive linear Lyapunov functions are used along with dissipativity theory for stability analysis and control of uncertain linear positive systems. Unlike usual results on linear systems, linear supply-rates are employed here for robustness and performance analysis using $L_1$- and $ L_\infty$-gains. Robust stability analysis is performed using Integral Linear Constraints (ILCs) for which several classes of uncertainties are discussed. The approach is then extended to robust stabilization and performance optimization. The obtained results are expressed in terms of robust linear programming problems that are equivalently turned into finite dimensional ones using Handelman's Theorem. Several examples are provided for illustration.
\end{abstract}

\begin{keywords}
Positive linear systems; robustness; integral linear constraints; robust control; robust linear programming; relaxation.
\end{keywords}

\section{Introduction}

Linear internally positive systems are a particular class of linear systems whose state takes only nonnegative values. Such models can represent many real world processes, from biology \cite{Haddad:10}, passing through ecology and epidemiology \cite{Murray:02}, to networking \cite{Shorten:06}. Compartmental models, used e.g. in biological, medical, epidemiological applications, are also generally expressed as (non)linear nonnegative systems \cite{Haddad:10}. 
Several works have been devoted to their analysis and control, see e.g. \cite{Farina:00, Leenheer:01, Haddad:05, Aitrami:07, Bru:09, Shorten:09, Tanaka:10, Tanaka:11, Ebihara:11,Rantzer:11}.

\emph{Quadratic Lyapunov functions} of the form $V(x)=x^TPx$, with $P=P^T$ positive definite, are the most commonly used to study the stability of linear systems. Dissipativity theory with \emph{quadratic storage functions} and \emph{quadratic supply-rates} are also widely used for robustness analysis, e.g. through the full-block S-procedure \cite{Scherer:97}, and norms computation, e.g. the $\mathcal{H}_\infty$-norm and generalized $\mathcal{H}_2$-norm. This quadratic framework also allows to apply powerful analysis techniques such as those based on \emph{Integral Quadratic Constraints} (IQCs) \cite{RantzerMegretski:97} (exploiting the KYP-Lemma \cite{Kalman:63,Rantzer:96,Scherer:05a} and the Plancherel Theorem), and \emph{well-posedness theory} \cite{Safonov:80, Iwasaki:98, Laas:07} which is mainly based on topological separation \cite{Safonov:80}. An interesting point in these approaches is that many of the obtained results can be represented as optimization problems involving Linear Matrix Inequalities (LMIs) \cite{Boyd:94a}, a wide class of convex optimization problems solvable in polynomial-time using e.g. interior-point algorithms \cite{NestNemi:94}.


Unlike general linear systems, the stability of linear positive systems can be losslessly analyzed by considering a positive definite \emph{diagonal matrix} $P$. Such functions are then called \emph{diagonal Lyapunov functions} and the corresponding stability notion \emph{diagonal stability} \cite{Barker:78}. This particular form dramatically simplifies the structure of the stability and stabilization problems, allowing for instance the design of structured and decentralized controllers in a simple way (convex formulation). Convexity does not hold in the general linear systems case where the design of structured controllers is known to be a NP-hard problem \cite{Blondel:97}. Recently, a KYP-Lemma for positive systems using diagonal Lyapunov functions has been obtained in \cite{Tanaka:10} by using the results reported in \cite{Shorten:09}. In the same vein, exact losslessness conditions for robust stability analysis have been further obtained in \cite{Tanaka:11} where several classes of uncertainties are covered.

There exists however another class of Lyapunov functions leading to necessary and sufficient conditions. They are referred to as \emph{linear copositive Lyapunov functions} \cite{Haddad:05, Aitrami:07, Mason:07, AitRami:09, Knorn:09, Fornasini:10} and write $V(x)=\lambda^Tx$, where $\lambda$ is a vector with positive entries. In such a case, the resulting stability condition can be expressed as a linear programming problem (convex again) and solved in an efficient way. Since the Lyapunov function is linear, there is no more relationship with the vector 2-norm and the $L_2$-norm as in the quadratic case, but rather with the vector 1-norm and the $L_1$-norm. This framework is then more suitable for the analysis of the $L_1$-gain of positive systems and its consideration in robustness and performance characterization.

%

In this paper, the stability analysis and control of uncertain positive systems is considered in the $L_1$-induced norm and $L_\infty$-induced norm using linear copositive Lyapunov functions and dissipativity theory \cite{Willems:72,Haddad:05} with linear supply-rates. Stability analysis and control synthesis results for unperturbed systems are first provided to set up the ideas and introduce the important tools. It is shown that computing the $L_1$- and $L_\infty$-gains for linear positive systems is tantamount to solving a linear programming problem, with a complexity growing linearly with respect to the system size. The $L_1$-gain is determined via a direct application of dissipativity theory while the $L_\infty$-gain is computed as the $L_1$-gain of the transposed of the original system. It is also shown that the computed gains are valid regardless of the sign of inputs and states, relaxing then the concept of positive system to certain conditions on the system matrices only. The consideration of nonnegative states and inputs are for computational and theoretical considerations only, i.e. the use of linear Lyapunov functions and linear supply-rates. Convex necessary and sufficient stabilization conditions using full, structured and bounded state-feedback controllers and accounting for performance bounds are then obtained. While the question of determining under which algebraic conditions a system can be made positive in closed-loop is still open, the proposed methodology can be applied to implicitly characterize them, together with a performance constraint.

Robust stability analysis is performed using Linear Fractional Transformations (LFTs), a classical tool of robust analysis \cite{Zhou:96}, very few used in the context of positive systems \cite{Briat:11g,Tanaka:11,Ebihara:11}. The advantage of using LFT is that parameter-varying systems depending rationally and polynomially in the parameters can be expressed as a simple interconnection of a time-invariant linear system and an uncertain matrix depending linearly on the parameters, simplifying then the analysis. The use of LFT also extends to any type of uncertainties and nonlinearities, e.g. delay operators, uncertain stable transfer functions, more general bounded uncertainties and static nonlinearities. In the context of uncertain positive systems, the overall system is rewritten as a \emph{positive interconnection} of a nominal system and a matrix of uncertain positive operators. These uncertain operators are characterized through Integral Linear Constraints (ILCs), the linear counterpart of IQCs. Although the provided framework does not enjoy the availability of the KYP Lemma nor the Plancherel Theorem, a frequency domain analysis can still be used in order to select the scalings accurately. Several classes of uncertainties are discussed and it is shown that for linear time-invariant uncertainties, ILCs fully characterize their static-gain matrices. Based on this fact, several general robustness properties are discussed.

Robust stability analysis results are finally derived using dissipativity theory and formulated as robust linear programming problems. Exact stability conditions are provided in the particular case of LTI positive uncertainties with fixed static-gain matrix. To improve tractability, Handelman's Theorem \cite{Handelman:88} is used to produce linear programs involving a finite number of constraints. A procedure to reduce the number of extra variables introduced by Handelman's Theorem is also proposed to reduce the computational complexity of the approach. The results are finally extended to robust stabilization. It is shown that the presence of scalings does not destroy the convexity of the problem as it is often the case in full-block S-procedure \cite{Scherer:97} or IQC approaches. Several examples demonstrate the efficiency of the approach and its exactness for a certain class of uncertainties including delays.

\textbf{Outline:} Section \ref{sec:prel} introduces the problem, fundamental definitions and results. Section \ref{sec:unp} is devoted to the stability analysis of unperturbed systems and Section \ref{sec:stabiunp} deals with their stabilization. This is extended to robust stability analysis and robust stabilization in Sections \ref{sec:per} and \ref{sec:stabip} respectively. In Section \ref{sec:solvingunc}, Handelman's Theorem is used to relax the robust optimization problems formulated in the previous sections. The results are then finally illustrated through examples in Section \ref{sec:ex}.

\textbf{Notations:}  $\mathds{1}_n\in\mathbb{R}^n$ denotes the column vector containing entries equal to 1. For general real matrices or vectors $A,B\in\mathbb{R}^{n\times m}$, the inequality $A<(\le)B$ is componentwise. Let $x\in\mathbb{R}^n$, the vector $\sigma$-norm, $\sigma$ positive integer, $\sigma<\infty$, is denoted by $||x||_\sigma=\left(\sum_{i=1}^n|x_i|^\sigma\right)^{1/\sigma}$, while the vector $\infty$-norm is defined by $||x||_\infty=\max_{i\in\{1,\ldots,n\}}|x_i|$. Given $v:[0,\infty)\to\mathbb{R}^n$, the $L_\sigma$-norm $||v||_{ L_\sigma}$, $\sigma$ positive integer, $\sigma<\infty$, and the $L_\infty$-norm $||v||_{ L_\infty}$ are defined by
$||v||_{ L_\sigma}=\left(\int_{0}^\infty||v(t)||_\sigma^\sigma dt\right)^{1/\sigma}$ and $||v||_{ L_\infty}=\esssup_{t\ge0}||v(t)||_\infty$, respectively. The spaces of signals $v:[0,\infty)\to\mathbb{R}^n$ having finite $L_\sigma$-norm is denoted by $L_\sigma^n$. When the dimension of the signal has no importance, we will use the shorthand $L_\sigma$.
For a matrix $X\in\mathbb{R}^{n\times m}$, $[X]_{ij}$, $[X]_{r,i}$ and $[X]_{c,i}$ denote the $(i,j)$ scalar entry, the $i^{th}$ row and the $i^{th}$ column. A linear map $\ell(x)=c^Tx$ is said to be copositive if $c^Tx\ge0$ for all $x\ge0$.
We also define the sets $\mathbb{R}^n_{++}=\left\{\alpha\in\mathbb{R}^n: \alpha>0\right\}$, $\mathbb{R}^n_+:=\left\{\alpha\in\mathbb{R}^n:\ \alpha\ge0, ||\alpha||\ne0\right\}$ and $\bar{\mathbb{R}}^n_+:=\left\{\alpha\in\mathbb{R}^n: \alpha\ge0\right\}$.

\section{Preliminaries}\label{sec:prel}

Let us consider general LTI systems of the form:
\begin{equation}\label{eq:mainsyst}
  \begin{array}{lcl}
    \dot{x}(t)&=&Ax(t)+Bu(t)+Ew(t)\\
    z(t)&=&Cx(t)+Du(t)+Fw(t)\\
    x(0)&=&x_0
  \end{array}
\end{equation}
where $x,x_0\in\mathbb{R}^n$, $u\in\mathbb{R}^m$, $w\in\mathbb{R}^p$ and $z\in\mathbb{R}^q$ are respectively the system state, the initial condition, the control input, the exogenous input and the controlled output. When the system (\ref{eq:mainsyst}) with $u\equiv0$ is asymptotically stable, it defines an operator from $L_\sigma\ni w\to z\in L_\sigma$, $\sigma$ positive integer. Such an operator framework is suitable for defining, computing and optimizing norms of systems \cite{Desoer:75a}, and determining robustness and performance properties.

%
\begin{define}\label{def:pos}
  Let us consider the uncontrolled version of system (\ref{eq:mainsyst}), i.e. $u\equiv 0$. The system is said to be \emph{positive} if the following conditions hold:
\begin{enumerate}
  \item[i)] The matrix $A$ is Metlzer, i.e. it has nonnegative off-diagonal entries,
  \item[ii)] The matrices $E,C,F$ are nonnegative, i.e. they only have nonnegative entries.
\end{enumerate}
\end{define}
A consequence of the above definition is that a) when $w\equiv 0$, we have $x(t)\in\bar{\mathbb{R}}^n_+$ for all $t\ge0$ and all $x_0\in\bar{\mathbb{R}}^n_+$; b) when  $x_0=0$, we have $z(t)\in\bar{\mathbb{R}}^q_+$ for all $t\ge0$ and all $w(t)\in\bar{\mathbb{R}}^p_+$. These facts justify the denomination of positive system. Note that the terminology also degenerates to the case of autonomous positive systems for which only the matrix $A$ must be Metzler.
%

\begin{define}
  A linear map $V(x)=\lambda^Tx$ with $V(0)=0$ is said to be a linear copositive Lyapunov function for the positive system $\dot{x}(t)=Ax(t)$ if both $V(x)>0$ and $\dot{V}(x)<0$ hold for all  $x\in\mathbb{R}^n_+$.
\end{define}




\begin{define}[$L_\sigma$-gains of operators]
Given an operator $\Sigma: L_\sigma^p\mapsto L_\sigma^q$, $\sigma$ positive integer, the $L_\sigma$-gain $||\Sigma||_{L_\sigma-L_\sigma}$ is defined as $$||\Sigma||_{L_\sigma-L_\sigma}:=\sup_{||w||_{L_\sigma}=1}||\Sigma w||_{L_\sigma}.$$
Equivalently, it is the smallest $\theta\ge0$ such that
$$||\Sigma w||_{L_\sigma}\le\theta||w||_{L_\sigma}$$
holds for all $w\in L_\sigma$.
%
\end{define}

\begin{define}\label{def:Lnorms}
  The $L_1$-gain  of an asymptotically stable linear time-invariant system $H$ with transfer function $\widehat{h}(s)=C(sI-A)^{-1}E+F$ mapping $p$ inputs to $q$ outputs is given by \cite{Desoer:75a}:
\begin{equation}\label{eq:L1def}
||H||_{L_1-L_1}=\max_{j\in\{1,\ldots,q\}}\left\{\sum_{i=1}^p\int_0^{+\infty}|h_{ij}(t)|dt\right\}
\end{equation}
where $h_{ij}(t)$ is the impulse response from input $j$ to output $i$. In the same way, the $L_\infty$-gain is given by
\begin{equation}
||H||_{ L_\infty- L_\infty}=\max_{i\in\{1,\ldots,p\}}\left\{\sum_{j=1}^q\int_0^{+\infty}|h_{ij}(t)|dt\right\}.
\end{equation}
\end{define}
The $L_1$-gain quantifies the gain of the \textit{most influent input} since the max is taken over the columns. In contrast, the $ L_\infty$-gain of a system is the max taken over the rows and then characterizes the \textit{most sensitive output}. Note that in the SISO case, the 2 norms obviously coincide, and so do all the $L_\sigma$-norms, as pointed out in \cite{Rantzer:11}. Another important fact, needed later, is the correspondence between the $L_1$-induced and $L_\infty$-induced norms using the notion of transposed system:
\begin{proposition}\label{prop:trans}
Let us consider a system $H$ with transfer function $\widehat{h}(s)=C(sI-A)^{-1}E+F$ and its corresponding transposed system $H^*$ having transfer function  $\widehat{h}^*(s)=E^T(sI-A^T)C^T+F^T$. The $L_\infty$-gain of  $H$ is related to the $L_1$-gain of  $H^*$ through the equality:
  \begin{equation}
  ||H||_{L_\infty-L_\infty}=||H^*||_{L_1-L_1}.
\end{equation}
%
%
\end{proposition}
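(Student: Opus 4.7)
The plan is to reduce the claim to a direct comparison of the two closed-form expressions supplied by Definition \ref{def:Lnorms}, by first identifying the impulse response of the transposed system.

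First I would observe that if $\widehat h(s) = C(sI-A)^{-1}E + F$ has impulse response matrix $h(t)\in\mathbb{R}^{q\times p}$ with entries $h_{ij}(t)$ (input $j\mapsto$ output $i$), then $H^\ast$ has transfer function $\widehat h^\ast(s) = \widehat h(s)^T$, so its impulse response matrix is $h(t)^T$. In particular, writing $h^\ast_{ij}(t)$ for the impulse response of $H^\ast$ from input $j$ to output $i$, one has $h^\ast_{ij}(t) = h_{ji}(t)$ for all $t\ge 0$. Note that $H^\ast$ maps $q$ inputs to $p$ outputs, so the index $i$ now ranges over $\{1,\dots,p\}$ and $j$ over $\{1,\dots,q\}$, matching the dimensions of $h^T$.

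Next I would apply the $L_1$-gain formula in Definition \ref{def:Lnorms} to $H^\ast$: since $H^\ast$ has $q$ inputs and $p$ outputs,
\begin{equation*}
\|H^\ast\|_{L_1-L_1} = \max_{j\in\{1,\ldots,q\}}\left\{\sum_{i=1}^{p}\int_0^{+\infty}|h^\ast_{ij}(t)|\,dt\right\} = \max_{j\in\{1,\ldots,q\}}\left\{\sum_{i=1}^{p}\int_0^{+\infty}|h_{ji}(t)|\,dt\right\}.
\end{equation*}
Relabeling the dummy indices (swap the roles of $i$ and $j$) produces
\begin{equation*}
\|H^\ast\|_{L_1-L_1} = \max_{i\in\{1,\ldots,q\}}\left\{\sum_{j=1}^{p}\int_0^{+\infty}|h_{ij}(t)|\,dt\right\},
\end{equation*}
which is precisely the formula for $\|H\|_{L_\infty-L_\infty}$ given in Definition \ref{def:Lnorms} (modulo harmless renaming, and using that the $L_\infty$-gain picks the max over rows while the $L_1$-gain picks the max over columns).

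There is no real analytic obstacle here; the entire content is bookkeeping about rows versus columns of the impulse response matrix. The only thing one must be careful about is dimension tracking: the transposed system has its input and output spaces swapped, so that the ``max over columns'' of $H^\ast$ corresponds exactly to the ``max over rows'' of $H$. Once this is made explicit, the equality is immediate.
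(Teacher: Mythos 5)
Your proof is correct and follows essentially the same route as the paper, which simply states that the result ``follows from the definitions of the transposed system and the norms''; you have merely made the row/column bookkeeping explicit, correctly using that $\widehat{h}^*=\widehat{h}^T$ so the impulse response of $H^*$ is $h(t)^T$ and the max-over-columns for $H^*$ becomes the max-over-rows for $H$. No gap.
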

\begin{proof}
  The proof follows from the definitions of the transposed system and the norms.
\end{proof}

%

\begin{proposition}\label{prop:normp}
  Given an asymptotically stable positive system $H$ with impulse response $h(t)\in\mathbb{R}^{q\times p}$, $t\ge0$, we have
\begin{equation}
||H||_{L_1-L_1}=\max\left\{\mathds{1}_q^T\widehat{h}(0)\right\}\quad \mathrm{and}\quad ||H||_{L_\infty- L_\infty}=\max\left\{\widehat{h}(0)\mathds{1}_p\right\}
\end{equation}
where $\widehat{h}$ is the Laplace transform of $h$.
\end{proposition}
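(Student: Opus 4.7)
The plan is to observe that, for a positive system, the impulse response $h(t)$ is componentwise nonnegative on $[0,\infty)$, so the absolute values in Definition \ref{def:Lnorms} can be dropped, and the sum-of-integrals reduces to the DC gain $\widehat{h}(0)$.

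First I would establish the nonnegativity of $h(t)$. Since $A$ is Metzler, the matrix exponential $e^{At}$ has all nonnegative entries for every $t\ge 0$ (a classical property of Metzler matrices, directly checkable by writing $e^{At}=\lim_{k\to\infty}(I+tA/k)^k$ after shifting $A$ by a sufficiently large multiple of the identity so that $I+tA/k\ge 0$). Since $C$ and $E$ are nonnegative by Definition \ref{def:pos}, the product $Ce^{At}E$ is entrywise nonnegative for all $t\ge 0$. The feedthrough contribution $F\delta(t)$ is likewise nonnegative because $F\ge 0$. Thus every entry $h_{ij}(t)$ is a nonnegative measure on $[0,\infty)$.

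Next I would compute the integral $\int_0^\infty h(t)\,dt$. Since $A$ is Hurwitz (by asymptotic stability of $H$) and Metzler, $-A^{-1}\ge 0$, and integrating termwise gives
\begin{equation*}
\int_0^\infty h(t)\,dt=\int_0^\infty Ce^{At}E\,dt+F=-CA^{-1}E+F=\widehat{h}(0),
\end{equation*}
which is itself entrywise nonnegative. Combining this with the nonnegativity of $h(t)$, the $L_1$-gain formula (\ref{eq:L1def}) becomes
\begin{equation*}
||H||_{L_1-L_1}=\max_{j}\sum_{i=1}^q\int_0^\infty h_{ij}(t)\,dt=\max_{j}\sum_{i=1}^q[\widehat{h}(0)]_{ij}=\max\bigl\{\mathds{1}_q^T\widehat{h}(0)\bigr\},
\end{equation*}
which is the first claim.

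Finally, for the $L_\infty$-gain, I would apply Proposition \ref{prop:trans}: $||H||_{L_\infty-L_\infty}=||H^*||_{L_1-L_1}$. The transposed system $H^*$ is itself positive (its dynamics matrix $A^T$ is Metzler, and $E^T,C^T,F^T$ are nonnegative), and has DC gain $\widehat{h}^*(0)=\widehat{h}(0)^T$. Applying the first part to $H^*$ yields $||H^*||_{L_1-L_1}=\max\{\mathds{1}_p^T\widehat{h}(0)^T\}=\max\{\widehat{h}(0)\mathds{1}_p\}$, giving the second claim. The only subtlety worth flagging is the handling of the Dirac part of the impulse response when $F\ne 0$; this is harmless because $F\ge 0$ contributes additively to both $h$ and $\widehat{h}(0)$, preserving every identity above.
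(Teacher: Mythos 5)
Your proposal is correct and follows essentially the same route as the paper's own proof: nonnegativity of $e^{At}$ for Metzler $A$, hence nonnegativity of $h(t)$, so the absolute values in (\ref{eq:L1def}) drop and the integral equals $\widehat{h}(0)$, with the $L_\infty$ case obtained via Proposition \ref{prop:trans}. You simply supply more detail (the limit argument for $e^{At}\ge 0$, the explicit evaluation $-CA^{-1}E+F$, and the remark about the Dirac part of $F$), all of which is consistent with the paper.
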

\begin{proof}
Since $A$ is Metlzer and Hurwitz, then we have $e^{At}\ge0$ for all $t\ge0$. It is then immediate to see that the impulse response $h(t)=Ce^{At}E+F$ is nonnegative as well. From the asymptotic stability of the system and (\ref{eq:L1def}) we hence have
\begin{equation*}
\begin{array}{lcl}
    \int_0^{+\infty}h(t)dt  &=& \left.\int_0^{+\infty}h(t)e^{-st}dt\right|_{s=0}\\
                            &=& \widehat{h}(0)
\end{array}
\end{equation*}
and we get the result for the $L_1$-gain. The $L_\infty$-gain expression follows from Proposition \ref{prop:trans}.
\end{proof}
The above results then show that the $L_1$- and $L_\infty$-gains of positive systems are intimately related to each others, and to the static-gain matrix of the system. The formulas given in Proposition \ref{prop:normp} also suggest that a theoretical analysis of the gains is possible.

\section{Stability and Performance Analysis of Unperturbed systems}\label{sec:unp}

In this section, nonconservative stability and performance analysis criteria for unperturbed systems are derived. It is assumed throughout this section that the system (\ref{eq:mainsyst}) is positive and that the control input is identically zero, i.e. $u\equiv0$. Similar results can be found in \cite{Zappavigna:10a, Briat:11g, Ebihara:11}.

\subsection{$L_1$-gain characterization and computation}

\begin{lemma}[$L_1$-gain characterization]\label{lem:L1}
  Let us consider system (\ref{eq:mainsyst}) and assume it is positive. Then, the following statements are equivalent:
\begin{enumerate}
\item[i)] System (\ref{eq:mainsyst}) is asymptotically stable and the $L_1$-gain of the transfer $w\mapsto z$ is smaller than $\gamma$.
  \item[ii)] System (\ref{eq:mainsyst}) is asymptotically stable and $\mathds{1}^T_q\widehat{h}(0)<\gamma\mathds{1}^T_p$ where $\widehat{h}(s)$ is the transfer function associated to system (\ref{eq:mainsyst}).
  \item[iii)] System (\ref{eq:mainsyst}) is asymptotically stable and $\mathds{1}^T_q(F-CA^{-1}E)<\gamma\mathds{1}^T_p$.
  \item[iv)] There exists $\lambda\in\mathbb{R}^n_{++}$ such that the linear program
  \begin{subequations}\label{eq:L1stability}
  \begin{gather}
     \lambda^TA+\mathds{1}_{q}^TC<0\label{eq:L1stability1}\\
     \lambda^TE-\gamma\mathds{1}_{p}^T+\mathds{1}_{q}^TF<0\label{eq:L1stability2}
  \end{gather}
\end{subequations}
 is feasible.
\end{enumerate}
\end{lemma}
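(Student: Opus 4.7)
The plan is to establish the equivalences cyclically as (i) $\Rightarrow$ (ii) $\Rightarrow$ (iii) $\Rightarrow$ (iv) $\Rightarrow$ (i), with almost all the technical work concentrated in the last step.

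The equivalence (i) $\Leftrightarrow$ (ii) is essentially a restatement of Proposition \ref{prop:normp}: once asymptotic stability holds, $||H||_{L_1-L_1}=\max\{\mathds{1}_q^T\widehat{h}(0)\}$, so the gain bound $<\gamma$ is precisely the componentwise inequality $\mathds{1}_q^T\widehat{h}(0)<\gamma\mathds{1}_p^T$. The step (ii) $\Leftrightarrow$ (iii) is a direct evaluation $\widehat{h}(0)=F-CA^{-1}E$, which is legitimate because asymptotic stability makes $A$ Hurwitz and hence invertible.

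For (iv) $\Rightarrow$ (i), I would apply dissipativity theory with the linear copositive storage function $V(x)=\lambda^Tx$ and the linear supply rate $s(w,z)=\gamma\mathds{1}_p^Tw-\mathds{1}_q^Tz$. Along trajectories,
\begin{equation*}
\dot{V}(x)+\mathds{1}_q^Tz-\gamma\mathds{1}_p^Tw=(\lambda^TA+\mathds{1}_q^TC)x+(\lambda^TE+\mathds{1}_q^TF-\gamma\mathds{1}_p^T)w,
\end{equation*}
which (\ref{eq:L1stability1})--(\ref{eq:L1stability2}) force to be strictly negative for all nonzero $(x,w)\in\bar{\mathbb{R}}^n_+\times\bar{\mathbb{R}}^p_+$. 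Taking $w\equiv0$ and using $\mathds{1}_q^TC\ge0$ yields $\dot{V}<0$ on $\mathbb{R}^n_+$, so $V$ is a strict copositive Lyapunov function and the unforced system is asymptotically stable. Integrating the dissipation inequality from $0$ to $\infty$ with $x(0)=0$, and noting that $V(x(t))\to0$ for $w\in L_1$, then delivers $||z||_{L_1}\le\gamma||w||_{L_1}$, with the strict induced-norm bound following from the strict LP constraints.

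The main obstacle is (iii) $\Rightarrow$ (iv): one has to exhibit a strictly positive certificate $\lambda$ from what is just a static-gain inequality. I would set
\begin{equation*}
\lambda^T:=(\mathds{1}_q^TC+\delta\mathds{1}_n^T)(-A^{-1})
\end{equation*}
for a sufficiently small $\delta>0$. Since $A$ is Metzler and Hurwitz, $-A^{-1}=\int_0^\infty e^{At}\,dt\ge0$; moreover, no column of $-A^{-1}$ can be identically zero (otherwise the identity $A(-A^{-1})=-I$ would fail on that coordinate), and this forces $\mathds{1}_n^T(-A^{-1})>0$ componentwise, so $\lambda\in\mathbb{R}^n_{++}$. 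A direct substitution then gives $\lambda^TA+\mathds{1}_q^TC=-\delta\mathds{1}_n^T<0$, while
\begin{equation*}
\lambda^TE+\mathds{1}_q^TF-\gamma\mathds{1}_p^T=\mathds{1}_q^T(F-CA^{-1}E)-\gamma\mathds{1}_p^T+\delta\mathds{1}_n^T(-A^{-1}E)
\end{equation*}
is strictly negative for small enough $\delta$, combining (iii) with $-A^{-1}E\ge0$. The delicate point is therefore the strict positivity of $\lambda$, which the nonvanishing-column argument on $-A^{-1}$ secures.
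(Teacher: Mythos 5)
Your proof is correct, and its technical core coincides with the paper's: the implication (iv)~$\Rightarrow$~(i) via dissipativity with the linear storage function $\lambda^Tx$ and linear supply rate, and the implication (iii)~$\Rightarrow$~(iv) via an explicit construction of $\lambda$ as a perturbation of $-A^{-T}C^T\mathds{1}_q$. The one genuine structural difference is your cyclic organization (i)~$\Rightarrow$~(ii)~$\Rightarrow$~(iii)~$\Rightarrow$~(iv)~$\Rightarrow$~(i): the paper instead proves (i)~$\Rightarrow$~(iv) directly by invoking external necessity results (dissipativity necessity theorems of Haddad--Chellaboina, losslessness of the linear S-procedure, or Farkas' lemma), and then separately establishes (iv)~$\Leftrightarrow$~(iii). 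Your arrangement makes the lemma self-contained, since necessity of the LP certificate is obtained through the static-gain characterization of Proposition~\ref{prop:normp} rather than by citation; you also make the paper's construction fully explicit by choosing $\nu^T=\mathds{1}_n^T(-A^{-1})$ and justifying $\nu>0$ through the no-zero-column argument for $-A^{-1}$, a point the paper leaves implicit. One small caveat, shared with the paper's own proof: the dissipation inequality in (iv)~$\Rightarrow$~(i) yields the gain bound for nonnegative inputs only, and the extension to arbitrary $w\in L_1$ rests on the nonnegativity of the impulse response (the content of the remark following the lemma); your cyclic structure would in fact let you sidestep this entirely by routing (iv)~$\Rightarrow$~(iii)~$\Rightarrow$~(ii)~$\Rightarrow$~(i) for the gain claim and keeping the Lyapunov argument only for stability.
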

\begin{proof}
It is immediate that ii) and iii) are equivalent since $\widehat{h}(0)=F-CA^{-1}E$. We then prove that iv) and i) are equivalent and, finally, consider the equivalence of iii) and iv).

\textbf{Proof of iv)$\Rightarrow$i)}

The proof relies on dissipativity theory for nonnegative systems \cite{Willems:72,Haddad:05}. Let us consider the copositive linear storage function ${V(x)=\lambda^Tx}$ with ${\lambda\in\mathbb{R}^n_{++}}$ and the supply rate ${s(w,z)=\gamma||w||_1-||z||_1}$, $\gamma>0$. Then, according to dissipativity theory, if the functional
  \begin{equation}
    \mathcal{H}(x,w,z)=V(x(t))-\int_0^ts(w(\eta),z(\eta))d\eta.
  \end{equation}
  is decreasing along the trajectories solution of the system (\ref{eq:mainsyst}), then the system is dissipative with respect to the supply-rate $s(w,z)$ and the $L_1$-gain is bounded from above by $\gamma$. Moreover, since $\lambda>0$ holds, then asymptotic stability of the system also follows. The derivative of $\mathcal{H}$ along the trajectories of (\ref{eq:mainsyst}) is given by
  \begin{equation*}
    \begin{array}{lcl}
      \dot{\mathcal{H}}&=&\lambda^T\dot{x}(t)-\gamma\mathds{1}_p^Tw(t)+\mathds{1}_q^Tz(t)\\
                        &=&\begin{bmatrix}
                        \lambda^TA+\mathds{1}_q^TC &\ \ & \lambda^TE-\gamma\mathds{1}_p^T+\mathds{1}_q^TF
  \end{bmatrix}\begin{bmatrix}
    x(t)\\
    w(t)
  \end{bmatrix}.
    \end{array}
  \end{equation*}
  Since the signals $x(t)$ and $w(t)$ are nonnegative, then $\dot{\mathcal{H}}$ is negative on $\mathbb{R}_+^{n+p}$ if and only if the left factor is negative, or equivalently if the conditions (\ref{eq:L1stability}) hold.

\textbf{Proof of i)$\Rightarrow$iv)}

The necessity comes from Theorems 5.1, 5.3 and 6.2 of \cite{Haddad:05}. This can also be viewed from the fact that the S-procedure is lossless for any number of constraints in the linear case \cite{Yaku:77, Jonsson:01}. An alternative proof given in \cite{Ebihara:11} relies on Farkas' Lemma (which is actually an alternative way of seeing the linear S-procedure \cite[Remark 4]{ Jonsson:01}).

\textbf{Proof of iv)$\Rightarrow$iii)}

A similar proof is given in \cite{Ebihara:11}. Assume $4)$ holds, then the matrix $A$ is Metlzer and Hurwitz, implying in turn that $A^{-1}$ is a nonpositive matrix, i.e. $A^{-1}\le 0$. By right-multiplying inequality (\ref{eq:L1stability1}) by $A^{-1}$, we obtain $  \lambda^T>-\mathds{1}^T_qCA^{-1}$ and after substitution into (\ref{eq:L1stability2}), we get the inequality $  \mathds{1}^T_q(F-CA^{-1}E)<\gamma\mathds{1}_p^T$ which coincides with the one of statement iii).

\textbf{Proof of iii)$\Rightarrow$iv)}
The proof relies on an explicit construction of a Lyapunov function such that asymptotic stability of the system implies the feasibility of the linear program of statement iv). Assume iii) holds, then there exist $\nu\in\mathbb{R}_{++}^n$ and $\eps>0$ such that $\nu^TA<0$ and
\begin{equation}\label{eq:jenlole1}
  \mathds{1}^T_qF+(\eps\nu^T-\mathds{1}^T_qCA^{-1})E<\gamma\mathds{1}_p^T.
\end{equation}
Noting that the term inside parentheses is positive since $\nu>0$ and $A^{-1}\le0$, then we can let $\lambda^T:=\eps\nu^T-\mathds{1}^T_qCA^{-1}$. By right-multiplying this inequality by $A$, we obtain
\begin{equation}\label{eq:jenlole2}
  \lambda^TA+\mathds{1}^T_qC=\eps\nu^TA
\end{equation}
which is negative from assumptions $\nu^TA<0$ and $\eps>0$. Since the left-hand term of (\ref{eq:jenlole2}) is identical to the one of (\ref{eq:L1stability1}), the implication of the feasibility of (\ref{eq:L1stability1}) is proved. Reorganizing now the terms of the inequality (\ref{eq:jenlole1}), we get $\lambda^TE-\gamma\mathds{1}_p^T+\mathds{1}^T_qF<0$, which is identical to (\ref{eq:L1stability2}). The proof is complete.
\end{proof}

\begin{remark}
  It is interesting to point out that despite of being computed with the assumption of nonnegative input signals $w\in L_1$ and nonnegative state values, \emph{the determined $L_1$-gain is valid for any input signal in $L_1$ and any initial state $x_0\in\mathbb{R}^n$}. This is due to the fact that the $L_1$-gain has alternative definition (\ref{eq:L1def}), which depends on the nonnegativity of impulse response only. Nonnegative input and output signals are considered for theoretical and computational convenience only.
\end{remark}

Lemma \ref{lem:L1} can be used to compute the exact $L_1$-gain of any asymptotically stable positive linear system as follows:
\begin{alg}[Computing the $L_1$-gain]\label{alg:L1stab}
The gain coincides with the optimal value of the following linear programming problem:
  \begin{equation*}
  \begin{array}{lcl}
        \min_{\lambda,\gamma}\ \gamma & \mathrm{s.t.}  & \lambda\in\mathbb{R}_{++}^n,\ \gamma>0\ \mathrm{and}\ (\ref{eq:L1stability})\ \mathrm{hold.}
  \end{array}
  \end{equation*}
\end{alg}

The complexity of the above linear programming problem is quite low since it is a linear program involving $n+1$ decision variables and $2n+p+1$ constraints. The complexity then grows linearly with respect to the size of the system.

\subsection{$L_\infty$-gain characterization and computation}

The following result is the $L_\infty$ counterpart of Lemma \ref{lem:L1}.
\begin{lemma}[$L_\infty$-gain characterization]\label{lem:Linf}
  Let us consider system (\ref{eq:mainsyst}) that we assume to be positive. Then, the following statements are equivalent:
\begin{enumerate}
\item[i)] System (\ref{eq:mainsyst}) is asymptotically stable and the $L_\infty$-gain of the transfer $w\mapsto z$ is smaller than $\gamma$.
  \item[ii)] System (\ref{eq:mainsyst}) is asymptotically stable and $\widehat{h}(0)\mathds{1}_p<\gamma\mathds{1}_q$ where $\widehat{h}(s)$ is the transfer function associated to system (\ref{eq:mainsyst}).
  \item[iii)] System (\ref{eq:mainsyst}) is asymptotically stable and $(F-CA^{-1}E)\mathds{1}_p<\gamma\mathds{1}_q$.
  \item[iv)] There exists $\lambda\in\mathbb{R}^n_{++}$ such that the linear program
  \begin{subequations}\label{eq:Linfstability}
  \begin{gather}
     A\lambda+E\mathds{1}_{p}<0\label{eq:Linfstability1}\\
      C\lambda-\gamma\mathds{1}_{q}+F\mathds{1}_{p}<0\label{eq:Linfstability2}
  \end{gather}
\end{subequations}
 is feasible.
\end{enumerate}
\end{lemma}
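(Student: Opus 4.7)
The plan is to derive Lemma \ref{lem:Linf} from Lemma \ref{lem:L1} by duality, using Proposition \ref{prop:trans} to transfer all $L_\infty$ statements about $H$ into $L_1$ statements about the transposed system $H^*$ with matrices $(A^T,C^T,E^T,F^T)$. A preliminary observation is that $H^*$ is itself a positive system: $A^T$ is Metzler because $A$ is Metzler, and $E^T,C^T,F^T$ inherit nonnegativity from $E,C,F$; moreover $H^*$ is asymptotically stable iff $H$ is, since $A$ and $A^T$ share the same spectrum. Therefore Lemma \ref{lem:L1} applies to $H^*$, which maps $q$ inputs to $p$ outputs.

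First I would establish the equivalence of statements i) and ii). Proposition \ref{prop:trans} gives $\|H\|_{L_\infty-L_\infty}=\|H^*\|_{L_1-L_1}$, so statement i) here is equivalent to statement i) of Lemma \ref{lem:L1} applied to $H^*$. By that lemma, this is equivalent to $\mathds{1}_p^T\widehat{h}^*(0)<\gamma\mathds{1}_q^T$. Since $\widehat{h}^*(0)=\widehat{h}(0)^T$ (a direct consequence of the definition of the transposed system), taking transposes yields $\widehat{h}(0)\mathds{1}_p<\gamma\mathds{1}_q$, which is exactly statement ii).

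The equivalence of statements ii) and iii) is immediate from $\widehat{h}(0)=F-CA^{-1}E$, just as in Lemma \ref{lem:L1}. For the equivalence with statement iv), I would apply Lemma \ref{lem:L1}~iv) to $H^*$: there exists $\lambda\in\mathbb{R}^n_{++}$ with
\begin{subequations}
\begin{gather}
\lambda^TA^T+\mathds{1}_p^TE^T<0,\\
\lambda^TC^T-\gamma\mathds{1}_q^T+\mathds{1}_p^TF^T<0.
\end{gather}
\end{subequations}
Transposing each inequality gives precisely (\ref{eq:Linfstability1})--(\ref{eq:Linfstability2}), so statement iv) here is the transpose of statement iv) for $H^*$. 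Chaining the equivalences $\text{i)}\Leftrightarrow\text{ii)}\Leftrightarrow\text{iii)}\Leftrightarrow\text{iv)}$ from Lemma \ref{lem:L1} applied to $H^*$ therefore yields the four equivalences claimed here.

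The proof is essentially a bookkeeping exercise and no real obstacle is expected; the only point requiring a bit of care is the dimension swap (the transposed system has $q$ inputs and $p$ outputs, so the roles of $\mathds{1}_p$ and $\mathds{1}_q$ are exchanged), but this is exactly what produces the column-sum structure ``$\widehat{h}(0)\mathds{1}_p$'' characteristic of the $L_\infty$-gain. Alternatively, the same four equivalences could be derived directly by repeating the dissipativity argument of Lemma \ref{lem:L1} with the supply-rate $s(w,z)=\gamma\|w\|_\infty-\|z\|_\infty$ restricted to nonnegative signals (so that $\|w\|_\infty=\max w$, amenable to a linear bound via $\mathds{1}_p$), but the duality route through Proposition \ref{prop:trans} is shorter and makes the symmetry between Lemmas \ref{lem:L1} and \ref{lem:Linf} transparent.
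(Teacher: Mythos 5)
Your proposal is correct and follows exactly the paper's own route: the paper likewise proves Lemma \ref{lem:Linf} by applying Lemma \ref{lem:L1} to the transposed system and invoking Proposition \ref{prop:trans}, merely stating the substitution without spelling out the details. Your version simply makes explicit the bookkeeping the paper leaves implicit (positivity of $H^*$, the identity $\widehat{h}^*(0)=\widehat{h}(0)^T$, and the transposition of the linear program), all of which is accurate.
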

\begin{proof}
  The proof relies on Proposition \ref{prop:trans}. By substituting the matrices of the transposed system into the conditions (\ref{eq:L1stability}) of Lemma \ref{lem:L1}, we get the conditions (\ref{eq:Linfstability}). The equivalence between the statements is immediate from Lemma \ref{lem:L1}.
%
\end{proof}

Similarly as for the $L_1$-gain, it is possible to compute the $L_\infty$-gain through an optimization problem.
\begin{alg}[Computing the $L_\infty$-gain]\label{alg:Linfstab}
The gain coincides with the optimal value of the following linear programming problem:
  \begin{equation*}
  \begin{array}{lcl}
        \min_{\lambda,\gamma}\ \gamma &\mathrm{s.t.}  & \lambda\in\mathbb{R}_{++}^n,\ \gamma>0\ \mathrm{and}\ (\ref{eq:Linfstability})\ \mathrm{hold.}
  \end{array}
  \end{equation*}
\end{alg}

Unlike the $L_1$-gain case, the number of constraints is $2n+q+1$ while the number of variables remains the same. Again, the computational complexity grows linearly with respect to the size of the system.

\section{Stabilization of unperturbed systems}\label{sec:stabiunp}

This section is devoted to the stabilization of unperturbed linear systems via state-feedback control laws of the form:
\begin{equation}\label{eq:cl}
  u(t)=Kx(t)
\end{equation}
where the controller gain $K$ belongs to one of the following sets:
\begin{enumerate}
  \item The set of unconstrained controllers $\mathcal{K}:=\mathbb{R}^{m\times n}$.
  \item The set of structured controllers $\mathcal{K}_c:=\left\{K\in\mathbb{R}^{m\times n}:\ [K]_{ij}=0,\ (i,j)\in\mathcal{S}_c\right\}$, where $\mathcal{S}_c$ is the set of indices corresponding to 0 entries in the controller gain.
  \item The set of bounded controllers $\mathcal{K}_b:=\left\{K\in\mathbb{R}^{m\times n}:\ K^-\le K\le K^+\right\}$, where $K^-$ and $K^+$ are the lower and upper bounds on the controller gain $K$, respectively.
\end{enumerate}
Unlike general LTI systems, the design of structured and bounded controllers is not a NP-hard problem \cite{Blondel:97,Peaucelle:08, Briat:09c}. By indeed using a diagonal Lyapunov function, necessary and sufficient conditions for the design structured controllers can be easily obtained. We show here that it is also the case in the current linear setting and that it readily extends to the stabilization of positive systems with guaranteed $L_\infty$-gain.

%
%
%
\begin{lemma}[Stabilization with $K\in\mathcal{K}$]\label{th:stabunp}
  Let us consider the closed-loop system (\ref{eq:mainsyst})-(\ref{eq:cl}), where (\ref{eq:mainsyst}) is not necessarily a positive system, with transfer function $\widehat{h}_{cl}(s,K):=(C+DK)(sI-A-BK)^{-1}E+F$. Then, the following statements are equivalent:
\begin{enumerate}
  \item[i)] There exists a controller matrix $K$ such that the closed-loop system (\ref{eq:mainsyst})-(\ref{eq:cl}) is positive, asymptotically stable and the $L_\infty$-gain of the transfer $w\mapsto z$ is less than $\gamma>0$.
  \item[ii)] There exists a controller matrix $K$ such that the closed-loop system (\ref{eq:mainsyst})-(\ref{eq:cl}) is positive, asymptotically stable and verifies ${\widehat{h}_{cl}(0,K)\mathds{1}_p<\gamma\mathds{1}_p}$.
  \item[iii)] There exist $\lambda\in\mathbb{R}^n_{++}$ and $\mu_i\in\mathbb{R}^m$, $i=1,\ldots,n$ such that the linear program
      \begin{subequations}\label{eq:Linfstabz}
  \begin{gather}
    A\lambda+B\sum_{i=1}^n\mu_i+E\mathds{1}_{p}<0\label{eq:Linfstabz1}\\
    C\lambda+D\sum_{i=1}^n\mu_i-\gamma\mathds{1}_{q}+F\mathds{1}_{p}<0\label{eq:Linfstabz2}\\
      \left[A\right]_{ij}\lambda_j+[B]_{r,i}\mu_j\ge0,\ i,j=1,\ldots, n,\ i\ne j\label{eq:Linfstabz3}\\
      \left[C\right]_{ij}\lambda_j+[D]_{r,i}\mu_j\ge0,\ i=1,\ldots,q,\ j=1,\ldots, n\label{eq:Linfstabz4}
  \end{gather}
\end{subequations}
 is feasible. In such a case, a suitable $K$ is given by
 \begin{equation}
         K=\begin{bmatrix}
           \lambda_1^{-1}\mu_1 & \ldots & \lambda_n^{-1}\mu_n
         \end{bmatrix}.
       \end{equation}
\end{enumerate}
\end{lemma}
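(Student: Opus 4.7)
The plan is to reduce statement iii) to an application of Lemma~\ref{lem:Linf} for the closed-loop system with matrices $(A+BK,\,E,\,C+DK,\,F)$, using a linearizing change of variables that decouples the product of the controller gain $K$ and the Lyapunov parameter $\lambda$. The equivalence i)$\Leftrightarrow$ii) is then immediate from Lemma~\ref{lem:Linf}: once $K$ renders the closed loop positive and asymptotically stable, the static-gain inequality $\widehat{h}_{cl}(0,K)\mathds{1}_p<\gamma\mathds{1}_p$ of part iii) of that lemma characterizes the $L_\infty$-gain bound.

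The heart of the argument is therefore iii)$\Leftrightarrow$ii). Given $\lambda\in\mathbb{R}^n_{++}$, I would introduce the invertible change of variables $\mu_j:=\lambda_j K_{c,j}$ for $j=1,\ldots,n$, where $K_{c,j}$ denotes the $j$-th column of $K$; invertibility follows from $\lambda_j>0$ and yields exactly the recovery formula $K_{c,j}=\lambda_j^{-1}\mu_j$ stated in the lemma. Three observations then perform the translation. First, $K\lambda=\sum_{j=1}^n \lambda_j K_{c,j}=\sum_{j=1}^n \mu_j$, so the two inequalities of Lemma~\ref{lem:Linf}-iv) applied to the closed loop, namely $(A+BK)\lambda+E\mathds{1}_p<0$ and $(C+DK)\lambda-\gamma\mathds{1}_q+F\mathds{1}_p<0$, collapse precisely to (\ref{eq:Linfstabz1})--(\ref{eq:Linfstabz2}). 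Second, the Metzler property of $A+BK$ reads $[A]_{ij}+[B]_{r,i}K_{c,j}\ge 0$ for $i\ne j$, and scaling by $\lambda_j>0$ gives (\ref{eq:Linfstabz3}). Third, the entrywise nonnegativity of $C+DK$ is equivalent, after the same scaling, to (\ref{eq:Linfstabz4}). Invoking Lemma~\ref{lem:Linf} on the closed-loop system then closes the equivalence in both directions.

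The main obstacle is conceptual rather than computational: closed-loop positivity is not given a priori but is part of what the controller must achieve, so the design problem is genuinely nonlinear in the pair $(K,\lambda)$. Conditions (\ref{eq:Linfstabz3})--(\ref{eq:Linfstabz4}) absorb the positivity constraint, and crucially they are \emph{linear} in $(\lambda,\mu_1,\ldots,\mu_n)$ alongside (\ref{eq:Linfstabz1})--(\ref{eq:Linfstabz2}). This is exactly what prevents the natural bilinearity between $K$ and $\lambda$ from surfacing and delivers a bona fide linear program rather than a BMI. The only verification needed beyond the substitution is that multiplication by $\lambda_j>0$ preserves strict and non-strict inequalities in both directions, which is immediate; after this, reading the conditions forward and backward through the bijection $(\lambda,\{\mu_j\})\leftrightarrow(\lambda,K)$ finishes the proof.
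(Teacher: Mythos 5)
Your proposal is correct and follows essentially the same route as the paper: substitute the closed-loop matrices $(A+BK,E,C+DK,F)$ into Lemma~\ref{lem:Linf}, linearize via $\mu_i=[K]_{c,i}\lambda_i$ using $K\lambda=\sum_i\lambda_i[K]_{c,i}$, and encode closed-loop positivity (Metzler $A+BK$, nonnegative $C+DK$) through the $\lambda_j$-scaled entrywise conditions (\ref{eq:Linfstabz3})--(\ref{eq:Linfstabz4}). Your write-up is in fact slightly more explicit than the paper's about the invertibility of the change of variables and the two-way preservation of the inequalities under multiplication by $\lambda_j>0$.
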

\begin{proof}
To prove the equivalence it is enough to show that statements 1) and 3) are equivalent. The rest follows from the proofs of Lemmas \ref{lem:L1} and \ref{lem:Linf}. The closed-loop system is given by
  \begin{equation}
    \begin{array}{lcl}
      \dot{x}(t)&=&(A+BK)x(t)+Ew(t),\\
      z(t)&=&(C+DK)x(t)+Fw(t).
    \end{array}
  \end{equation}
 Substituting the closed-loop system into (\ref{eq:Linfstability}) yields
  \begin{equation}
    \begin{bmatrix}
      (A+BK)\lambda+E\mathds{1}_{p}\\
      (C+DK)\lambda-\gamma\mathds{1}_{q}+F\mathds{1}_{p}
    \end{bmatrix}<0.
  \end{equation}
Noting that $K\lambda=\sum_{i=1}^n\lambda_i[K]_{c,i}$, it turns out that the change of variable $\mu_i=[K]_{c,i}\lambda_i$ linearizes the problem and yields (\ref{eq:Linfstabz1})-(\ref{eq:Linfstabz2}). To ensure the positivity constraint on the closed-loop system we need to impose $A+BK$ and $C+DK$ to be Metzler and nonnegative, respectively. Using the same procedure as in \cite{Aitrami:07}, these constraints are captured by inequalities (\ref{eq:Linfstabz3})-(\ref{eq:Linfstabz4}).
\end{proof}
The above theorem can hence be viewed as an extension of \cite{Aitrami:07} where no performance criterion is considered. The following proposition discusses the cases $K\in\mathcal{K}_c$ and $K\in\mathcal{K}_b$.
\begin{proposition}[Cases $K\in\mathcal{K}_c$ and $K\in\mathcal{K}_b$]
These cases can be easily handled by adding supplementary constraints to the linear program of Lemma \ref{th:stabunp}.
\begin{itemize}
  \item The design of a controller gain $K\in\mathcal{K}_c$ is ensured by considering the additional linear constraints ${\begin{bmatrix}
      \mu_1 & \ldots & \mu_m
    \end{bmatrix}_{ij}=0}$, for all ${(i,j)\in\mathcal{S}_c}$.
  \item Prescribed bounds on the coefficients of the controller gain are imposed, i.e. $K\in\mathcal{K}_b$ by considering the additional linear constraints ${[K^-]_{r,i}\lambda_i\le\mu_i\le[K^+]_{r,i}\lambda_i}$, $i=1,\ldots, m$, $K^-,K^+\in\mathbb{R}^{m\times n}$, $K^-\le K^+$.
\end{itemize}
\end{proposition}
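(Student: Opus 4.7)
The plan is to verify that both additional constraint sets translate \emph{linearly} under the change of variable $\mu_i=[K]_{c,i}\lambda_i$ introduced in the proof of Lemma \ref{th:stabunp}, so that they can be appended to the linear program \eqref{eq:Linfstabz} without destroying convexity. The key observation, used repeatedly below, is that the entries $\lambda_i$ of $\lambda\in\mathbb{R}^n_{++}$ are strictly positive, so multiplying or dividing a scalar (in)equality by any $\lambda_i$ is an equivalence and preserves the direction of inequalities.

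First I would treat the structured case $K\in\mathcal{K}_c$. By construction, the $i$-th entry of the column vector $\mu_j\in\mathbb{R}^m$ equals $[K]_{ij}\lambda_j$, equivalently $[K]_{ij}=\lambda_j^{-1}[\mu_j]_i$. Hence $[K]_{ij}=0$ if and only if $[\mu_j]_i=0$, since $\lambda_j>0$. Writing the $n$ column vectors side by side into the $m\times n$ matrix $\begin{bmatrix}\mu_1 & \ldots & \mu_n\end{bmatrix}$ recovers $K$ up to column-scaling by the $\lambda_j$'s, and the sparsity pattern $\mathcal{S}_c$ is preserved. Therefore the constraint $K\in\mathcal{K}_c$ is equivalent to the family of \emph{linear} equality constraints $\bigl[\begin{matrix}\mu_1 & \ldots & \mu_n\end{matrix}\bigr]_{ij}=0$ for $(i,j)\in\mathcal{S}_c$, which can be added to \eqref{eq:Linfstabz} as claimed.

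Next I would handle the bounded case $K\in\mathcal{K}_b$. Entrywise, $K^-\le K\le K^+$ reads $[K^-]_{ij}\le [K]_{ij}\le [K^+]_{ij}$ for all $i,j$. Since $\lambda_j>0$, multiplying through by $\lambda_j$ yields the equivalent double inequality $[K^-]_{ij}\lambda_j\le[K]_{ij}\lambda_j\le[K^+]_{ij}\lambda_j$, and by the definition of $\mu_j$, the middle term equals $[\mu_j]_i$. Collecting these entrywise inequalities into vector form gives $[K^-]_{c,j}\lambda_j\le \mu_j\le [K^+]_{c,j}\lambda_j$ for $j=1,\ldots,n$, which is again a family of linear constraints in the decision variables $(\lambda,\mu_1,\ldots,\mu_n)$. (The statement in the proposition uses $[K^{\pm}]_{r,i}$ with index $i$, but up to relabelling this is the same family of $mn$ scalar constraints.)

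There is no real obstacle beyond being careful that $\lambda>0$ strictly, which is already guaranteed by Lemma \ref{th:stabunp}: this is precisely what makes the change of variable invertible and the two types of constraints genuinely linear rather than bilinear. Once \eqref{eq:Linfstabz} together with either the sparsity or the box constraint is feasible, the controller reconstruction $K=\begin{bmatrix}\lambda_1^{-1}\mu_1 & \ldots & \lambda_n^{-1}\mu_n\end{bmatrix}$ of Lemma \ref{th:stabunp} automatically lies in $\mathcal{K}_c$ or $\mathcal{K}_b$, respectively, and enforces closed-loop positivity, asymptotic stability and the $L_\infty$-gain bound $\gamma$.
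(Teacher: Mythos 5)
Your proposal is correct and is exactly the argument the paper leaves implicit: since $\lambda_j>0$, the change of variables $\mu_j=[K]_{c,j}\lambda_j$ from Lemma \ref{th:stabunp} turns both the sparsity pattern and the entrywise bounds on $K$ into equivalent linear constraints on $(\lambda,\mu_1,\ldots,\mu_n)$, preserving necessity in both directions. You also rightly flag that the proposition's indexing (writing $\mu_m$ and the row selections $[K^{\pm}]_{r,i}$ with $i=1,\ldots,m$) should read as columns $[K^{\pm}]_{c,j}$ for $j=1,\ldots,n$ to be dimensionally consistent with $\mu_j\in\mathbb{R}^m$.
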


Note that in both cases the necessity of the approach is preserved. The above results are thus nonconservative. Other constraints like asymmetric bounds on the control input and the consideration of bounded states can also be easily considered \cite{Aitrami:07}.


\section{Robust stability analysis and robust performance}\label{sec:per}

Let us focus now on uncertain linear systems subject to real parametric uncertainties ${\delta\in\deltab:=[0,1]^N}$, $N>0$, of the form
\begin{equation}\label{eq:unsyst}
\begin{array}{lcl}
    \dot{x}(t)&=&A_\delta(\delta)x(t)+B_\delta(\delta)u(t)+E_\delta(\delta)w_1(t)\\
    z_1(t)&=&C_\delta(\delta)x(t)+D_\delta(\delta)u(t)+F_\delta(\delta)w_1(t)\\
    x(0)&=&x_0
\end{array}
\end{equation}
where $x\in\bar{\mathbb{R}}^n_+$, $x_0\in\bar{\mathbb{R}}^n_+$, $u\in\mathbb{R}^m$, $w_1\in\bar{\mathbb{R}}^p_+$ and $z_1\in\bar{\mathbb{R}}^q_+$ are the system state, the initial condition, the control input, the exogenous input and the performance output respectively. We assume in this section that the above system is positive, that is, for all $\delta\in\deltab$, the matrix $A_\delta(\delta)$ is Metzler, and $E_\delta(\delta),C_\delta(\delta),F_\delta(\delta)$ are nonnegative matrices. We also assume that the system matrices are continuous functions of $\delta$.

While this section mainly focuses on positive systems subject to real parametric uncertainties, the proposed methodology basically applies to any type of positive interconnections involving, for instance, delays, positive infinite-dimensional operators, time-invariant and time-varying uncertain positive operators, static sign-preserving nonlinearities, etc. A justification for focusing on parametric uncertainties lies in the particular property of positive systems that (as we shall see later in Section \ref{sec:handling}) only the static-gain matrix is critical for evaluating stability of interconnections. This property implies that many problems involving uncertain positive transfer functions equivalently reduce to problems involving constant parametric uncertainties.

%
%

\subsection{Positive Linear Fractional Transformation}

Using LFT, the system (\ref{eq:unsyst}) is rewritten as
\begin{equation}\label{eq:LFR}
  \begin{array}{lcl}
    \dot{x}(t)&=&Ax(t)+E_0w_0(t)+E_1w_1(t)\\
    z_0(t)&=&C_0x(t)+F_{00}w_0(t)+F_{01}w_1(t)\\
    z_1(t)&=&C_1x(t)+F_{10}w_0(t)+F_{11}w_1(t)\\
    w_0(t)&=&\Delta(\delta)z_0(t)
  \end{array}
\end{equation}
where the loops signals $z_0(t),w_0(t)\in\bar{\mathbb{R}}^{n_0}_+$ have been added. It is important to keep in mind that we are considering robustness in the $L_1$-norm, the tractability of which relying on the nonnegativity of the loop signals $w_0$ and $z_0$. Representation (\ref{eq:LFR}) is not unique, even when minimal, so the arising question concerns the constructability of a so-called \emph{positive Linear Fractional Representation}(LFR), i.e. an LFR of the form (\ref{eq:LFR}) with nonnegative loop signals and positive operators, corresponding to the positive system (\ref{eq:unsyst}).  Positive interconnections are recurrent in the analysis of positive systems and have already been considered in several works \cite{Farina:00, AitRami:06, Ebihara:11, Briat:11g}. By picking suitable matrices for the LFR, it is always possible to make the loop signals nonnegative. 
%
%
%

\subsection{Handling positive uncertainties via Integral Linear Constraints}\label{sec:handling}

In robust stability analysis theories addressing two interconnected systems, stability conditions very often consist of two separate conditions: one for each subsystem. The first one, generally very precise, is used to characterize the nominal system which usually enjoys nice properties like linearity, time-invariance, etc. The second condition, often more difficult to derive when high precision is sought, is used to characterize the uncertain part. Several approaches have been developed to study the stability of interconnections like small-gain results \cite{Zhou:96} and generalizations \cite{Scherer:97}, well-posedness/quadratic separation \cite{Iwasaki:98a,Laas:07} (based on topological separation arguments \cite{Safonov:80}) and IQCs \cite{RantzerMegretski:97}. IQCs are very powerful objects that are able to implicitly describe, with high accuracy, uncertain operators through the characterization of their input/output signals. They unfortunately do not fit in the current framework since they are based on quadratic forms, while the proposed approach considers linear ones. Inspired from this idea, Integral Linear Constraints (ILCs) are considered in this paper.

\subsubsection{Definitions of ILCs and preliminary results}\ \\

\vspace{-3mm}\noindent Let us consider first an uncertain operator $\Sigma:L_1\mapsto L_1$ belonging to a known set $\Sigmab$. The main idea is to implicitly characterize the set $\Sigmab$ using ILCs as\footnote{the absolute value is componentwise.}
\begin{equation}
  \Sigmab\subseteqq\left\{\Sigma:\int_0^{+\infty} \varphi_{1,i}^T\left(|w(t)|+\varphi_{2,i}^T|z(t)|\right)dt\ge0,\ z=\Sigma w,\ i=1,\ldots\right\}.
\end{equation}
The vectors $\varphi_{1,i}$ and $\varphi_{2,i}$ are called \emph{scalings} and must be chosen according to the set $\Sigmab$. Note that the absolute values disappear when considering a set of positive operators and nonnegative input signals.

One important feature of IQCs lies in the fact that, by virtue of the Plancherel Theorem, it is possible to express the inequality in the frequency domain, domain in which the tuning of the scalings may be easier and/or more precise. Finally, by virtue of the Kalman-Yakubovich-Popov Lemma and the S-procedure, the frequency domain conditions are converted back to time-domain conditions, taking the form of tractable LMI-problems. The main concern is that the Plancherel Theorem does not exist in $L_1$ and we cannot expect to switch from time to frequency domain in the same spirit as in $L_2$. It is however still possible to consider the frequency domain as stated in the following result:
\begin{lemma}\label{lem:freq}
  The following equivalent statements hold:
  \begin{enumerate}
    \item[i)] The ILC
    \begin{equation}\label{eq:ILCprop}
          \int_0^{+\infty}\left(\varphi_{1}^Tw(t)+\varphi_{2}^Tz(t)\right)dt\ge0
    \end{equation}
    holds for every pairs of positive signals $(w,z)$ such that $z=\Sigma w$, where $\Sigma$ is a positive operator.
    \item[ii)] The algebraic inequality
    \begin{equation}\label{eq:algineq}
      \varphi_1^T\widehat{w}(0)+\varphi_2^T\widehat{z}(0)\ge0
    \end{equation}
    holds where $\widehat{w}$ and $\widehat{z}$ are the Laplace transform of the signals $w$ and $z$, respectively.
  \end{enumerate}
\end{lemma}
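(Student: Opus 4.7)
The plan is to observe that, for any signal $v \in L_1^n$, evaluating its Laplace transform at $s=0$ reduces simply to integrating $v$ over the positive real axis, since $e^{-0\cdot t}=1$; explicitly, $\widehat{v}(0)=\int_0^\infty v(t)\,dt$. The lemma is then essentially a tautological reformulation of this identity.

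First, I would check that $\widehat{w}(0)$ and $\widehat{z}(0)$ are well-defined. This is immediate from the setup: $w$ is a positive signal in $L_1$ and $\Sigma : L_1 \to L_1$ is a positive operator, so $z = \Sigma w$ is nonnegative and lies in $L_1$ as well. Hence both integrals converge absolutely and the Laplace transforms at zero are finite nonnegative vectors.

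Next, applying the identity componentwise and using linearity of the integral, I would write $\varphi_1^T\widehat{w}(0)+\varphi_2^T\widehat{z}(0)=\int_0^\infty\varphi_1^Tw(t)\,dt+\int_0^\infty\varphi_2^Tz(t)\,dt=\int_0^\infty\left(\varphi_1^Tw(t)+\varphi_2^Tz(t)\right)dt$. Thus the quantity appearing in the ILC of statement (i) integrates to exactly the left-hand side of the algebraic inequality in statement (ii), and the two implications (i)$\Rightarrow$(ii) and (ii)$\Rightarrow$(i) follow at once by reading this chain of equalities in either direction.

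There is essentially no obstacle: the lemma simply makes explicit the one bridge between time and frequency domains that survives in $L_1$, namely evaluation at $s=0$. Its role in the sequel is to justify tuning the scalings $\varphi_1,\varphi_2$ from the static-gain matrix $\widehat{\Sigma}(0)$ of the uncertain operator when $\Sigma$ is LTI, even though no Plancherel-type machinery is available in this setting.
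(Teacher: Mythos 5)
Your proposal is correct and follows essentially the same argument as the paper: both proofs rest on the single identity $\widehat{v}(0)=\int_0^\infty v(t)\,dt$ and read the resulting chain of equalities in both directions. Your added remark on the well-definedness of $\widehat{w}(0)$ and $\widehat{z}(0)$ is a small, harmless refinement the paper leaves implicit.
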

\begin{proof} The proof follows from noting that
  \begin{equation}
    \begin{array}{lcl}
      \int_0^{+\infty} \left(\varphi_{1}^Tw(t)+\varphi_{2}^Tz(t)\right)dt&=&\left.\int_0^{+\infty}\left[\left( \varphi_{1}^Tw(t)+\varphi_{2}^Tz(t)\right)e^{-st}\right]dt\right|_{s=0}\\
&=&\varphi_1^T\widehat{w}(0)+\varphi_2^T\widehat{z}(0).
    \end{array}
  \end{equation}
\end{proof}
Assuming that the pair of nonnegative signals $(w,z)$ is related by the equality $z=\Sigma_{LTI}^+ w$ where $\Sigma_{LTI}^+ $ is a \emph{linear time-invariant} positive operator, we have the following corollary:
\begin{corollary}\label{cor:freq}
  The following equivalent statements hold:
  \begin{enumerate}
    \item[i)] The ILC
    \begin{equation}\label{eq:ILCprop}
          \int_0^{+\infty}\left(\varphi_{1}^Tw(t)+\varphi_{2}^Tz(t)\right)dt\ge0
    \end{equation}
    holds for every pairs of nonnegative signals $(w,z)$, $z=\Sigma_{LTI}^+ w$, where $\Sigma_{LTI}^+$ is a linear time-invariant positive operator.
    \item[ii)] The algebraic inequality
    \begin{equation}\label{eq:LTIineq}
      \varphi_1^T+\varphi_2^T\widehat{\Sigma}_{LTI}^+(0)\ge0
    \end{equation}
    holds where $\widehat{\Sigma}_{LTI}^+(s)$ is the transfer function corresponding to the linear time-invariant positive operator $\Sigma_{LTI}^+$.
  \end{enumerate}
\end{corollary}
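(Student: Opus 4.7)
The plan is to reduce Corollary \ref{cor:freq} directly to Lemma \ref{lem:freq} by exploiting the extra structure provided by linearity and time-invariance. First, I would observe that if $z = \Sigma_{LTI}^+ w$ with $\Sigma_{LTI}^+$ LTI, then taking Laplace transforms and evaluating at $s=0$ yields $\widehat{z}(0) = \widehat{\Sigma}_{LTI}^+(0)\,\widehat{w}(0)$. Substituting this identity into statement ii) of Lemma \ref{lem:freq} rewrites the algebraic inequality (\ref{eq:algineq}) as
\begin{equation*}
\bigl(\varphi_1^T+\varphi_2^T\widehat{\Sigma}_{LTI}^+(0)\bigr)\widehat{w}(0)\ge 0.
\end{equation*}
So the question becomes: when does this scalar inequality hold for every admissible $\widehat{w}(0)$, and is that equivalent to the componentwise inequality (\ref{eq:LTIineq})?

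The implication (ii) $\Rightarrow$ (i) is the easy direction: if the row vector $\varphi_1^T+\varphi_2^T\widehat{\Sigma}_{LTI}^+(0)$ is componentwise nonnegative, then pairing it with the nonnegative vector $\widehat{w}(0)=\int_0^{+\infty} w(t)\,dt\in\bar{\mathbb{R}}^p_+$ gives the scalar ILC, and we invoke Lemma \ref{lem:freq} to conclude (i).

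For the converse (i) $\Rightarrow$ (ii), the main step is to verify that as $w$ ranges over all nonnegative $L_1$ input signals, $\widehat{w}(0)$ ranges over all of $\bar{\mathbb{R}}^p_+$. This is immediate: for any prescribed $v\in\bar{\mathbb{R}}^p_+$, choose $w(t)=v\,\chi(t)$, where $\chi$ is any nonnegative scalar function in $L_1$ with $\int_0^{+\infty}\chi(t)\,dt=1$; then $\widehat{w}(0)=v$, and this signal is nonnegative. Taking $v=e_i$, the $i$-th canonical basis vector, the scalar inequality collapses to the $i$-th component of (\ref{eq:LTIineq}); ranging over $i=1,\ldots,p$ gives the full componentwise statement.

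There is no real obstacle here: the only subtle point is checking that the class of admissible input signals $w$ is rich enough to probe every nonnegative direction of $\widehat{w}(0)$, which is settled by the above explicit construction. The rest is a direct combination of Lemma \ref{lem:freq} with the LTI identity $\widehat{z}(0)=\widehat{\Sigma}_{LTI}^+(0)\widehat{w}(0)$.
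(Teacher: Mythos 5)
Your proposal is correct and follows essentially the same route as the paper: factor the algebraic inequality of Lemma \ref{lem:freq} as $\bigl(\varphi_1^T+\varphi_2^T\widehat{\Sigma}_{LTI}^+(0)\bigr)\widehat{w}(0)\ge0$ and use the nonnegativity of $\widehat{w}(0)$. The only difference is that you make explicit the step the paper leaves implicit — that $\widehat{w}(0)$ sweeps all of $\bar{\mathbb{R}}^p_+$, so the scalar inequality for all admissible $w$ forces the componentwise inequality — which is a worthwhile clarification but not a different proof.
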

\begin{proof}  Since $\Sigma_{LTI}^+$ is a linear time-invariant positive operator, then the left-hand side of the inequality (\ref{eq:algineq}) can be written as $\left[\varphi_1^T+\varphi_2^T\widehat{\Sigma}(0)\right]\widehat{w}(0)$. The nonnegativity of the letter expression is equivalent to the nonnegativity of the term into brackets since $\widehat{w}(0)=\int_0^\infty w(t)dt\ge0$. This concludes the proof.
\end{proof}

\subsubsection{Generic robustness results and remarks}\ \\

\vspace{-3mm}\noindent Lemma \ref{lem:freq} and Corollary \ref{cor:freq} show that, similarly as for IQCs, the scalings may be selected both in the time-domain and the frequency domain. In the LTI case, the scalings just have to be selected according to the static-gain matrix of the system, which shows that the problem reduces to a problem with constant parametric uncertainties since the static-gain matrix is a (possibly uncertain) real matrix. Another conclusion is that, since only the zero-frequency is important, there is thus no need for capturing the entire frequency domain by using frequency-dependent scalings as in the $\mu$-analysis \cite{Zhou:02}, IQC-based techniques \cite{RantzerMegretski:97} or approaches using dynamic D-scalings \cite{Scherer:07b}. It is hence expected that a constant scaling matrix acting on the static-gain matrix should be enough for obtaining interesting results.

Along these lines, it is proved in \cite{Ebihara:11} that the use of the scaled $L_1$-gain yields necessary and sufficient conditions for the characterization of any interconnection of LTI positive systems. Similar results are obtained in the $L_2$ framework in \cite{Tanaka:11}. These results can be understood through the fact that the stability of interconnections of LTI positive systems is equivalent to the stability of interconnections of their static-gain matrix, i.e. higher order dynamics have no impact on stability. The critical stability information is hence concentrated in the static part of the dynamical systems.

Based on the above remarks, generic robustness properties are formalized below:
%
%
\begin{theorem}\label{th:robgen}
  The following statements are equivalent:
  \begin{enumerate}
    \item[i)] The linear time-invariant positive operator $\Sigma$ satisfies the ILC (\ref{eq:ILCprop}).
    \item[ii)] The static-gain matrix $\widehat{\Sigma}(0)$ satisfies (\ref{eq:LTIineq}).
    \item[iii)] The linear time-invariant positive operator $\Sigma^\prime$ satisfies the ILC (\ref{eq:ILCprop}), where $\widehat{\Sigma}^\prime(s)=\widehat{\Sigma}(s)+s\Theta(s)$ and $s\Theta(s)$ is any positive asymptotically stable proper transfer function.
  \end{enumerate}
\end{theorem}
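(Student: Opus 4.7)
The plan is to reduce the whole statement to a single invocation of Corollary \ref{cor:freq} applied to two different operators, $\Sigma$ and $\Sigma'$, and then identify their static-gain matrices.

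First I would handle i) $\Leftrightarrow$ ii). Since $\Sigma$ is by hypothesis a linear time-invariant positive operator, Corollary \ref{cor:freq} is directly applicable: the ILC (\ref{eq:ILCprop}) holds along all pairs $(w,z)$ with $z=\Sigma w$ and $w,z\ge 0$ if and only if
\begin{equation*}
  \varphi_1^T+\varphi_2^T\widehat{\Sigma}(0)\ge 0,
\end{equation*}
which is precisely (\ref{eq:LTIineq}). Nothing more is needed here.

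The substantive content is the equivalence ii) $\Leftrightarrow$ iii). The key observation is that the static gain of $\Sigma'$ coincides with that of $\Sigma$. Indeed, since $s\Theta(s)$ is, by assumption, a proper and asymptotically stable transfer function, its evaluation at $s=0$ is a well-defined finite matrix which, because of the factor $s$, equals zero. Consequently,
\begin{equation*}
  \widehat{\Sigma}^\prime(0)=\widehat{\Sigma}(0)+\left.s\Theta(s)\right|_{s=0}=\widehat{\Sigma}(0).
\end{equation*}
Applying Corollary \ref{cor:freq} a second time, now to the positive LTI operator $\Sigma'$, the ILC (\ref{eq:ILCprop}) holds along pairs generated by $\Sigma'$ if and only if $\varphi_1^T+\varphi_2^T\widehat{\Sigma}^\prime(0)\ge 0$, which by the above identity is the same inequality (\ref{eq:LTIineq}) appearing in ii). Chaining the two equivalences closes the cycle i) $\Leftrightarrow$ ii) $\Leftrightarrow$ iii).

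The only subtle point I anticipate is the interpretation of the perturbation $s\Theta(s)$: what is required is that $\Theta$ is chosen so that the product $s\Theta(s)$ is itself proper, positive and asymptotically stable (so that $\Sigma'$ remains a positive LTI operator to which Corollary \ref{cor:freq} applies), while also guaranteeing $(s\Theta(s))|_{s=0}=0$. Once this is verified, no additional frequency-domain machinery is needed; all the heavy lifting was already done in Lemma \ref{lem:freq} and Corollary \ref{cor:freq}, which reduce ILC-feasibility for LTI positive operators to an algebraic condition on the static gain.
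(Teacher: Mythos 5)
Your proposal is correct and follows essentially the same route as the paper: the equivalence i) $\Leftrightarrow$ ii) is a direct application of Corollary \ref{cor:freq}, and the equivalence with iii) rests on the identity $\widehat{\Sigma}^\prime(0)=\widehat{\Sigma}(0)$ together with a second application of the corollary (the paper additionally obtains one direction by specializing to $\Theta(s)=0$, but your static-gain argument covers both directions at once). Your closing remark about needing $\left.s\Theta(s)\right|_{s=0}=0$ correctly flags the one point the paper glosses over, namely that $\Theta$ itself must be stable (no pole at the origin) for the factor $s$ to annihilate the perturbation at zero frequency.
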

\begin{proof}
The equivalence between i) and ii) follows from Corollary \ref{cor:freq}. The proof of ii) $\Rightarrow$ i) follows from choosing $\Theta(s)=0$. To show the converse, it is enough to remark that $\widehat{\Sigma}^\prime(0)=\widehat{\Sigma}(0)$ and hence $\Sigma^\prime$ satisfies the ILC (\ref{eq:ILCprop}) for any positive asymptotically stable strictly proper transfer function $\Theta(s)$.
\end{proof}
The above result interestingly shows that the set of uncertainties that satisfies the ILC is typically very large since there is basically no limit on the magnitude of the coefficients of $\Theta(s)$ acting on powers of $s$. 
%



\subsection{Main results}

Based on the results and discussions of the previous sections, the main results on robust stability analysis of linear positive systems subject to uncertain parametric uncertainties can be stated. Both $L_1$- and $L_\infty$-induced norms are considered. The conditions are expressed as robust linear optimization/feasibility problems that are difficult to solve directly. An exact solving scheme based on Handelman's Theorem is  proposed in Section \ref{sec:solvingunc}.

\begin{theorem}\label{th:robL1}
  Assume there exist a vector $\lambda\in\mathbb{R}^n_{++}$ and polynomials $\varphi_1(\delta),\varphi_2(\delta)\in\mathbb{R}^{n_0}$ such that the robust linear program
  \begin{equation}\label{eq:robstab1_L1}
  \begin{array}{rcl}
    \lambda^TA+\varphi_1(\delta)^TC_0+\mathds{1}_q^TC_1&<&0\\
    \lambda^TE_0+\varphi_2(\delta)^T+\varphi_1(\delta)^TF_{00}+\mathds{1}_q^TF_{10}&<&0\\
    \lambda^TE_1-\gamma\mathds{1}_p^T+\varphi_1(\delta)^TF_{01}+\mathds{1}_q^TF_{11}&<&0\\
  \end{array}
  \end{equation}
  \begin{equation}\label{eq:robstab2_L1}
    \varphi_1(\delta)^T+\varphi_2(\delta)^T\Delta(\delta)\ge0
  \end{equation}
  is feasible for all $\delta\in\deltab$. Then, the uncertain linear positive system (\ref{eq:unsyst}) is asymptotically stable and the $L_1$-gain of the transfer $w_1\to z_1$ is smaller than $\gamma>0$.
\end{theorem}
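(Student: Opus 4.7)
The plan is to extend the dissipativity argument of the proof of Lemma \ref{lem:L1} to the uncertain LFR (\ref{eq:LFR}), using the ILC (\ref{eq:robstab2_L1}) to absorb the feedback block $\Delta(\delta)$. Fix $\delta\in\deltab$ and take the copositive linear storage function $V(x)=\lambda^Tx$ together with the linear $L_1$-supply rate $s(w_1,z_1)=\gamma\mathds{1}_p^T w_1-\mathds{1}_q^T z_1$. The key construction is to form the augmented integrand
\begin{equation*}
\Phi(t) := \dot{V}(x(t)) - s(w_1(t),z_1(t)) + \varphi_1(\delta)^T z_0(t) + \varphi_2(\delta)^T w_0(t),
\end{equation*}
substitute the state and output equations of (\ref{eq:LFR}), and collect the coefficients of the independent nonnegative signals $x$, $w_0$, and $w_1$. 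A direct computation shows these coefficients are exactly the three row vectors appearing on the left-hand sides of (\ref{eq:robstab1_L1}), so the hypothesis gives $\Phi(t)<0$ pointwise on the nonnegative orthant guaranteed by the positivity of the LFR.

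Integrating $\Phi(t)<0$ from $0$ to $T$ yields
\begin{equation*}
V(x(T)) - V(x(0)) + \int_0^T\!\!\bigl(\varphi_1(\delta)^T z_0 + \varphi_2(\delta)^T w_0\bigr)dt - \int_0^T\!\! s(w_1,z_1)\,dt < 0,
\end{equation*}
and the scaling integral is then disposed of via the ILC. Since $\Delta(\delta)$ is a constant positive static-gain matrix for fixed $\delta$ and $w_0=\Delta(\delta)z_0$, this integral factors as $\bigl(\varphi_1(\delta)^T+\varphi_2(\delta)^T\Delta(\delta)\bigr)\int_0^T z_0\,dt$; the row vector is nonnegative by (\ref{eq:robstab2_L1}) and the remaining integral is nonnegative by nonnegativity of $z_0$, which is precisely the algebraic form of the ILC predicted by Corollary \ref{cor:freq}. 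Dropping this nonnegative term, letting $T\to\infty$ with $x(0)=0$ and using $V\ge 0$ together with the componentwise nonnegativity of $w_1,z_1$ to identify the supply-rate integral with $\gamma\|w_1\|_{L_1}-\|z_1\|_{L_1}$ gives the desired $L_1$-gain bound. Asymptotic stability is obtained by repeating the argument with $w_1\equiv 0$: the resulting dissipation inequality forces $V(x(t))$ to be bounded and strictly decreasing along autonomous trajectories of the positive closed loop, and since $V$ is positive on $\bar{\mathbb{R}}^n_+\setminus\{0\}$ the state converges to the origin.

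The main obstacle is the clean passage to the limit $T\to\infty$, which implicitly relies on well-posedness of the positive feedback interconnection and on finite-horizon absolute integrability of the loop signals $z_0,w_0$ under $L_1$ inputs; without these the algebraic ILC from Corollary \ref{cor:freq} cannot be exchanged for its integral form. A secondary subtlety is that $\varphi_1(\delta),\varphi_2(\delta)$ are \emph{polynomials} in $\delta$, but the dissipation inequality is argued at a single fixed $\delta\in\deltab$, so these are just constant vectors in the computation above; robustness over all $\delta$ is then inherited from the assumption that (\ref{eq:robstab1_L1})--(\ref{eq:robstab2_L1}) hold uniformly on $\deltab$.
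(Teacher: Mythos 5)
Your proposal is correct and is essentially the paper's own argument: the paper proves this theorem by repeating the dissipativity computation of Lemma \ref{lem:L1} with the augmented supply-rate $s=-\varphi_1(\delta)^Tz_0-\varphi_2(\delta)^Tw_0+\gamma\mathds{1}_p^Tw_1-\mathds{1}_q^Tz_1$, which is exactly your $\Phi=\dot V-s$, and the three coefficient vectors you identify are indeed the left-hand sides of (\ref{eq:robstab1_L1}) while the ILC (\ref{eq:robstab2_L1}) disposes of the scaling integral. Your remarks on well-posedness and on $\varphi_i(\delta)$ being constants for fixed $\delta$ are sensible elaborations of details the paper leaves implicit.
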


\begin{proof}
The proof follows exactly the same lines as for Lemma \ref{lem:L1}. Note however that the supply-rate
\begin{equation}
  s(w,z)=-\varphi_1(\delta)^Tz_0(t)-\varphi_2(\delta)^Tw_0(t)+\gamma \mathds{1}_p^Tw_1(t)-\mathds{1}_q^Tz_1(t)
\end{equation}
has to be considered here.
\end{proof}

It is important to mention that the above condition is unlikely a necessary condition since the linear Lyapunov function does not depend on the parameters. It is indeed well-known that quadratic stability (parameter independent Lyapunov function) is more conservative than robust stability (parameter dependent Lyapunov function).

The following result addresses the case when the ILC can be saturated using constant scalings:
\begin{theorem}\label{th:noncons}
The following statements are equivalent:
\begin{enumerate}
  \item[i)] System (\ref{eq:LFR}) with $\Delta=\Delta_0$, $\Delta_0\ge0$ constant, is asymptotically stable and the $L_1$-gain of the transfer $w\to z$ is smaller than $\gamma$.
  \item[ii)] System (\ref{eq:LFR}) with $\widehat{\Delta}(s)$ LTI, asymptotically stable, positive and $\widehat{\Delta}(0)=\Delta_0\ge0$ constant, is asymptotically stable and the $L_1$-gain of the transfer $w\to z$ is smaller than $\gamma$.
  \item[iii)] There exist vectors $\lambda\in\mathbb{R}^n_{++}$, $\varphi_1,\varphi_2\in\mathbb{R}^{n_0}$ such that the linear program
  \begin{equation}\label{eq:robstab1_L1b}
  \begin{array}{rcl}
    \lambda^TA+\varphi_1^TC_0+\mathds{1}_q^TC_1&<&0\\
    \lambda^TE_0+\varphi_2^T+\varphi_1^TF_{00}+\mathds{1}_q^TF_{10}&<&0\\
    \lambda^TE_1-\gamma\mathds{1}_p^T+\varphi_1^TF_{01}+\mathds{1}_q^TF_{11}&<&0\\
  \end{array}
  \end{equation}
  \begin{equation}\label{eq:robstab2_L1b}
    \varphi_1^T+\varphi_2^T\Delta_0=0
  \end{equation}
  is feasible.
\end{enumerate}
\end{theorem}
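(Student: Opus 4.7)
\textbf{Proof plan for Theorem \ref{th:noncons}.}

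The plan is to prove the cycle (ii)$\Rightarrow$(i)$\Rightarrow$(iii)$\Rightarrow$(ii). The implication (ii)$\Rightarrow$(i) is immediate, since the constant operator $\widehat{\Delta}(s)\equiv \Delta_0$ is itself a linear time-invariant asymptotically stable positive operator with static gain $\Delta_0$, so (ii) instantiated at this specific $\widehat{\Delta}$ is exactly (i). For (iii)$\Rightarrow$(ii), I would reuse verbatim the dissipativity argument from the proof of Theorem \ref{th:robL1} with the linear copositive storage function $V(x)=\lambda^Tx$ and the supply-rate $s=-\varphi_1^Tz_0-\varphi_2^Tw_0+\gamma\mathds{1}_p^Tw_1-\mathds{1}_q^Tz_1$; the three strict inequalities (\ref{eq:robstab1_L1b}) guarantee $\dot V\le s$, and the key observation is that the equality (\ref{eq:robstab2_L1b}) together with Corollary \ref{cor:freq} and Theorem \ref{th:robgen} ensures that $\int_0^\infty(\varphi_1^Tz_0+\varphi_2^Tw_0)dt\ge 0$ for \emph{every} LTI asymptotically stable positive $\widehat{\Delta}$ with $\widehat{\Delta}(0)=\Delta_0$, not only for the constant one, so the gain bound is obtained uniformly over the whole class appearing in (ii).

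The substantive implication is (i)$\Rightarrow$(iii), which is the necessity direction. Under (i), well-posedness and closed-loop positivity give that $M:=(I-F_{00}\Delta_0)^{-1}$ exists and is nonnegative, and the loop can be closed explicitly so that the closed-loop system has matrices $A_{cl}=A+E_0\Delta_0 MC_0$, $E_{cl}=E_1+E_0\Delta_0 MF_{01}$, $C_{cl}=C_1+F_{10}\Delta_0 MC_0$, $F_{cl}=F_{11}+F_{10}\Delta_0 MF_{01}$. Applying Lemma \ref{lem:L1}(iv) to this positive closed-loop produces a $\lambda\in\mathbb{R}^n_{++}$ satisfying the two strict inequalities (\ref{eq:L1stability1})--(\ref{eq:L1stability2}). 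The remaining task is to extract scalings $\varphi_1,\varphi_2$. Setting $\alpha^T:=\lambda^TE_0+\mathds{1}_q^TF_{10}$, I would choose
\begin{equation*}
\varphi_1^T:=(\alpha^T+\eps\mathds{1}^T)\Delta_0 M,\qquad \varphi_2^T:=-\alpha^T-\varphi_1^TF_{00}-\eps\mathds{1}^T
\end{equation*}
for a sufficiently small $\eps>0$. The verification is then purely algebraic: condition (\ref{eq:robstab2_L1b}) follows from the push-through identity $\Delta_0M=(I-\Delta_0F_{00})^{-1}\Delta_0$, which yields $\varphi_1^T(I-F_{00}\Delta_0)=(\alpha^T+\eps\mathds{1}^T)\Delta_0$ and hence $\varphi_1^T+\varphi_2^T\Delta_0=0$; the second row of (\ref{eq:robstab1_L1b}) is, by construction, exactly $-\eps\mathds{1}^T<0$; and the first and third rows reduce respectively to the closed-loop inequalities $\lambda^TA_{cl}+\mathds{1}_q^TC_{cl}<0$ and $\lambda^TE_{cl}-\gamma\mathds{1}_p^T+\mathds{1}_q^TF_{cl}<0$ perturbed by a term of order $\eps$, hence remain strict for $\eps$ small enough.

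The main obstacle, and the place where care is needed, is precisely this last construction: a naive choice ($\eps=0$) annihilates the second row of (\ref{eq:robstab1_L1b}) exactly (this reflects the duality between the ILC being saturated and the second inequality being tight), so strictness has to be restored by a deliberate slack. Identifying where to inject that slack — namely inside $\varphi_2$ rather than inside $\lambda$ — and checking that the slack propagates consistently through the push-through identity so that the equality (\ref{eq:robstab2_L1b}) is preserved exactly, is the delicate step. Everything else (well-posedness of $M$, application of Lemma \ref{lem:L1} to the closed-loop, dissipativity integration in (iii)$\Rightarrow$(ii)) is standard given the machinery already developed in the paper.
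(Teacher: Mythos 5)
Your proposal is correct and follows the same overall route as the paper: the equivalence of i) and ii) via the static-gain argument of Theorem \ref{th:robgen}/Corollary \ref{cor:freq}, sufficiency of iii) by dissipativity with the linear storage $\lambda^Tx$ and the scaled supply-rate, and necessity by explicitly inverting the loop. The one place where you genuinely add value is the necessity direction. The paper proves iii)$\Rightarrow$i) by substituting $\varphi_1^T=-\varphi_2^T\Delta_0$ and $\varphi_2^T=-(\lambda^TE_0+\mathds{1}_q^TF_{10})(I-\Delta_0F_{00})^{-1}$ and then asserts that i)$\Rightarrow$iii) follows ``by reversing the calculations''; but taken literally, that reverse substitution makes the second row of (\ref{eq:robstab1_L1b}) identically zero rather than strictly negative, exactly as you observe. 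Your choice $\varphi_1^T=(\alpha^T+\eps\mathds{1}^T)\Delta_0(I-F_{00}\Delta_0)^{-1}$, $\varphi_2^T=-\alpha^T-\varphi_1^TF_{00}-\eps\mathds{1}^T$ with $\alpha^T=\lambda^TE_0+\mathds{1}_q^TF_{10}$ verifiably preserves the saturation equality (\ref{eq:robstab2_L1b}) through the push-through identity, forces the second row to equal $-\eps\mathds{1}^T$, and perturbs the first and third rows only by $O(\eps)$ terms, so strict feasibility survives for $\eps$ small; this is a clean repair of the paper's elision. The only point you should state explicitly rather than in passing is why $M=(I-F_{00}\Delta_0)^{-1}$ is nonnegative (e.g. $\rho(F_{00}\Delta_0)<1$ from well-posedness of the positive interconnection, whence $M=\sum_k(F_{00}\Delta_0)^k\ge0$), since nonnegativity of $M$ is what makes the closed-loop matrices you invoke Metzler/nonnegative and lets Lemma \ref{lem:L1} apply.
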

\begin{proof}
  The equivalence between i) and ii) follows from Theorem \ref{th:robgen}. It is easy to show from dissipativity theory that iii) implies i). This can also be proved by substituting $\varphi_1^T$ by $-\varphi_2^T\Delta_0$ in the inequalities (\ref{eq:robstab1_L1b}) and further substituting $\varphi_2^T$ by $-\lambda^TE_0(I-\Delta_0F_{00})^{-1}-\mathds{1}_q^TF_{10}(I-\Delta_0F_{00})^{-1}$. This leads to the inequalities
  \begin{equation}
  \begin{array}{lcl}
    \lambda^T(A+E_0(I-\Delta_0F_{00})^{-1}\Delta_0C_0)+\mathds{1}_q^T\left(C_1+F_{10}(I-\Delta_0F_{00})^{-1}\Delta_0C_0\right)&<&0\\
    \lambda^T(A+E_0(I-\Delta_0F_{00})^{-1}\Delta_0F_{01})-\gamma\mathds{1}_p^T+\mathds{1}_q^T\left(F_{11}+F_{10}(I-\Delta_0F_{00})^{-1}\Delta_0F_{01}\right)&<&0
  \end{array}
  \end{equation}
  which are exactly the conditions for asymptotic stability and bounded $L_1$-gain. By reversing the calculations, we can show that i) implies iii).
\end{proof}
This result can indeed be extended to the $L_\infty$-case, this is omitted for brevity.

\begin{theorem}\label{th:robLinf}
  Assume there exist a vector $\lambda\in\mathbb{R}^n_{++}$ and polynomials $\varphi_1(\delta),\varphi_2(\delta)\in\mathbb{R}^{n_0}$ such that the robust linear program
    \begin{equation}\label{eq:robstab1_Linf}
  \begin{array}{rcl}
    A\lambda+\bar{C}_0\varphi_1(\delta)+E_1\mathds{1}_p&<&0\\
    \tilde{E}_0^T\lambda+\varphi_2(\delta)+\bar{F}_{00}^T\varphi_1(\delta)+\tilde{F}_{10}^T\mathds{1}_p&<&0\\
    C_1\lambda-\gamma \mathds{1}_q+\bar{F}_{01}\varphi_1(\delta)+F_{11}\mathds{1}_p&<&0\\
  \end{array}
  \end{equation}
  \begin{equation}\label{eq:robstab2_Linf}
    \varphi_1(\delta)^T+\varphi_2(\delta)^T\Delta(\delta)\ge0,\ \delta\in\deltab
  \end{equation}
  is feasible for all $\delta\in\deltab$. Then, the uncertain linear positive system (\ref{eq:unsyst}) is asymptotically stable and the $L_\infty$-gain of the transfer $w_1\to z_1$ is smaller than $\gamma>0$.
\end{theorem}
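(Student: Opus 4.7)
The plan is to derive this result by exactly the same transposition mechanism that was used in the proof of Lemma \ref{lem:Linf}. Proposition \ref{prop:trans} tells us that the $L_\infty$-gain of the transfer $w_1\mapsto z_1$ coincides with the $L_1$-gain of the transposed system, so rather than developing a fresh dissipativity argument (which would be awkward, since linear copositive storage functions pair naturally with the $L_1$ supply-rate and not with an $L_\infty$-one), I would simply apply Theorem \ref{th:robL1} to the transposed version of the LFR (\ref{eq:LFR}).

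The first step is to write down this transposed LFR explicitly. The transpose is obtained by replacing $A$ by $A^T$, each $E_j$ by $C_j^T$, each $C_j$ by $E_j^T$, and each $F_{ij}$ by $F_{ji}^T$, while the uncertainty channel $w_0=\Delta(\delta)z_0$ becomes a channel with matrix $\Delta(\delta)^T$. Since $\Delta(\delta)$ is a positive operator by assumption, $\Delta(\delta)^T$ is positive as well, and the transposed interconnection is again a well-defined positive LFR for every $\delta\in\deltab$, so the hypotheses of Theorem \ref{th:robL1} are satisfied.

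I would then apply Theorem \ref{th:robL1} to this transposed system with scalings $\varphi_1(\delta),\varphi_2(\delta)\in\mathbb{R}^{n_0}$ and a positive vector $\lambda\in\mathbb{R}^n_{++}$. Writing the three inequalities of (\ref{eq:robstab1_L1}) for the transposed matrices and then transposing them back (converting row-vector inequalities into column-vector ones) yields (\ref{eq:robstab1_Linf}), once one identifies the decorated symbols $\bar C_0,\tilde E_0,\bar F_{00},\tilde F_{10},\bar F_{01}$ with the appropriate blocks of the original system. In the same way, the ILC (\ref{eq:robstab2_L1}) for the transposed channel reads $\varphi_1(\delta)^T+\varphi_2(\delta)^T\Delta(\delta)^T\ge 0$, which, by taking the scalar transpose, is the same componentwise condition as (\ref{eq:robstab2_Linf}).

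The main obstacle is purely bookkeeping: one must keep careful track of which index (input versus output) pairs with which indicator vector $\mathds{1}_p$ or $\mathds{1}_q$ after transposition, and verify that each ornament (bar or tilde) appearing on the matrices in the theorem statement corresponds to the correct transposed block. Once this correspondence is nailed down, the desired conclusion---asymptotic stability of (\ref{eq:unsyst}) together with $\|w_1\mapsto z_1\|_{L_\infty-L_\infty}<\gamma$---follows at once from Theorem \ref{th:robL1} applied to the transposed system, combined with Proposition \ref{prop:trans}.
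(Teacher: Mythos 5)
Your overall strategy---invoke Proposition \ref{prop:trans} to convert the $L_\infty$ claim into an $L_1$ claim for the transposed system and then apply Theorem \ref{th:robL1}---is exactly the route the paper takes, so the architecture of the argument is right. There is, however, one concrete slip in how you build the transposed LFR, and it is precisely the pitfall the paper flags in the remark following the theorem: the transpose of an LFT interconnection is \emph{not}, in general, the LFT of the transposed system. You propose to transpose the representation (\ref{eq:LFR}) block by block (swapping $E_j\leftrightarrow C_j^T$, $F_{ij}\to F_{ji}^T$, $\Delta\to\Delta^T$). That does yield \emph{a} valid LFR of the transposed plant, but it is not the one the theorem is phrased in: the paper instead constructs a fresh LFR (\ref{eq:LFR2}) directly from the transposed uncertain matrices $A_\delta(\delta)^T$, $C_\delta(\delta)^T$, etc., which is why some blocks carry bars and tildes (they genuinely differ from the transposes of the original blocks) and, crucially, why the uncertainty channel in (\ref{eq:LFR2}) is still $w_0=\Delta(\delta)z_0$ with the \emph{un-transposed} $\Delta(\delta)$.

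This matters for your last step. Applying Theorem \ref{th:robL1} to your block-transposed LFR produces the ILC $\varphi_1(\delta)^T+\varphi_2(\delta)^T\Delta(\delta)^T\ge0$, and your claim that taking "the scalar transpose" turns this into (\ref{eq:robstab2_Linf}) is false: transposing gives $\varphi_1(\delta)+\Delta(\delta)\varphi_2(\delta)\ge0$, which is a different componentwise condition from $\varphi_1(\delta)+\Delta(\delta)^T\varphi_2(\delta)\ge0$ whenever $\Delta(\delta)$ is not symmetric. The two coincide in the common case of block-diagonal scalar repetitions $\delta_iI$, but the theorem is stated for general $\Delta(\delta)$. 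The fix is to follow the paper: form the LFR of the transposed system so that the same $\Delta(\delta)$ reappears in the loop, accept that the resulting matrices $\bar C_0,\tilde E_0,\bar F_{00},\bar F_{01},\tilde F_{10}$ are new objects rather than transposes of the originals, and then the substitution into (\ref{eq:robstab1_L1})--(\ref{eq:robstab2_L1}) delivers (\ref{eq:robstab1_Linf})--(\ref{eq:robstab2_Linf}) verbatim.
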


\begin{proof}
  The proof is based on the use the LFT system corresponding of the transposed system of (\ref{eq:unsyst}) given by:
  \begin{equation}\label{eq:LFR2}
  \begin{array}{lcl}
    \dot{x}(t)&=&A^Tx(t)+\tilde{E}_0w_0(t)+C_1^Tw_1(t)\\
    z_0(t)&=&\bar{C}_0x(t)+\bar{F}_{00}w_0(t)+\bar{F}_{01}w_1(t)\\
    z_1(t)&=&E_1^Tx(t)+\tilde{F}_{10}w_0(t)+F_{11}^Tw_1(t)\\
    w_0(t)&=&\Delta(\delta)z_0(t)
  \end{array}
\end{equation}
where the matrices $\tilde{F}_{10}$ and $\tilde{E}_0$ are specific matrices of the transposed system. All the other matrices are those of systems (\ref{eq:unsyst}) and (\ref{eq:LFR}).
\end{proof}

\begin{remark}
It must be stressed here that the Linear Fractional Transformation does not commute with the operation of transposition. In other words, the transposed of an LFT system does not coincide, in general, with the LFT of the transposed system. Some matrices may indeed remain unchanged (non transposed) while some others are different. This has motivated the use of the 'bar' and 'tilde' notations in (\ref{eq:LFR2}). For instance, when the system depends polynomially on the parameters, we may have the equalities $\bar{F}_{00}=F_{00},\bar{F}_{01}=F_{01}$ and $\bar{C}_0=C_0$.
\end{remark}

\section{Robust Stabilization}\label{sec:stabip}

The robust stabilization problem with $L_\infty$-performance is solved in this section. It is interesting to note that the presence of scalings never destroys the convexity of the approach as this may occur in some robust control approaches based on IQC's \cite{RantzerMegretski:97} or the full-block S-procedure \cite{Scherer:97}.

As in Section \ref{sec:stabiunp}, it is not necessary that the open-loop be positive, the only requirement is the nonnegativity of disturbance input matrices, i.e. $E(\delta),F(\delta)$ nonnegative for all $\delta\in\deltab$. The LFR of the closed-loop transposed system is given by:
\begin{equation}\label{eq:LFRcl}
  \begin{array}{lcl}
    \dot{\tilde{x}}(t)&=&\mathcal{A}(K)\tilde{x}(t)+\mathcal{E}_0(K)\tilde{w}_0(t)+\mathcal{E}_1(K)\tilde{w}_1(t)\\
    \tilde{z}_0(t)&=&\mathcal{C}_0\tilde{x}(t)+\mathcal{F}_{00}\tilde{w}_0(t)+\mathcal{F}_{01}\tilde{w}_1(t)\\
    \tilde{z}_1(t)&=&\mathcal{C}_1\tilde{x}(t)+\mathcal{F}_{10}\tilde{w}_0(t)+\mathcal{F}_{11}\tilde{w}_1(t)\\
    \tilde{w}_0(t)&=&\Delta(\delta)^T\tilde{z}_0(t)
  \end{array}
\end{equation}
where $\mathcal{A}(K)=(A^0+B^0K)^T$, $ \mathcal{E}_1(K)=(C_1^0+D^0K)^T$ and
\begin{equation}
     \mathcal{E}_0(K)=\begin{bmatrix}
       (A^1+B^1K)^T & \ldots & (A^\eta+B^\eta K)^T & (C_1^1+D^1K)^T & \ldots & (C_1^\theta+D^\theta K)^T
     \end{bmatrix}.
\end{equation}
The integers $\eta,\theta>0$ are related to the system dependency on the parameters.

\begin{theorem}\label{th:stabilizationz}
 The controlled system (\ref{eq:unsyst})-(\ref{eq:cl}) with $K\in\mathcal{K}$ is asymptotically stable if there exist vectors $\lambda\in\mathbb{R}^n_{++}$, $\varphi_1(\delta),\varphi_2(\delta)\in\mathbb{R}^{n_0}$,  $\mu_i\in\mathbb{R}^m$, $i=1,\ldots,n$ and a scalar $\gamma>0$ such that the robust linear program
  \begin{equation}\label{eq:stabzlol}
    \begin{array}{rcl}
    A^0\lambda+B^0\mu+\mathcal{C}_0^T\varphi_1(\delta)+\mathcal{C}_1^T\mathds{1}_p&<&0\\
    \tilde{\mathcal{E}}_0(\lambda,\mu)+\varphi_2(\delta)+\mathcal{F}_{00}^T\varphi_1(\delta)+\mathcal{F}_{10}^T\mathds{1}_p&<&0\\
    C_1^0\lambda+D^0\mu-\gamma\mathds{1}_q+\mathcal{F}_{01}^T\varphi_1(\delta)+\mathcal{F}_{11}\mathds{1}_p&<&0\\
    \varphi_1^T+\varphi_2^T\Delta(\delta)&\ge&0
  \end{array}
  \end{equation}
  \begin{equation}\label{eq:semiinfcst}
    \begin{array}{lcl}
      [A_\delta(\delta)]_{ij}\lambda_j+[B_\delta(\delta)]_{r,i}\mu_j&\ge&0,\ i,j=1,\ldots, n,\ i\ne j\\
      \left[C_\delta(\delta)\right]_{ij}\lambda_j+[D_\delta(\delta)]_{r,i}\mu_j&\ge&0,\ i=1,\ldots,q,\ j=1,\ldots, n
    \end{array}
  \end{equation}
is feasible for all $\delta\in\deltab$ with $\mu=\sum_{i=1}^n\mu_i$, $\tilde{\mathcal{E}}_0=\begin{bmatrix}
  \tilde{\mathcal{E}}_0^1 & \tilde{\mathcal{E}}_0^2
\end{bmatrix}$ and
  \begin{equation}
  \begin{array}{lcl}
    \tilde{\mathcal{E}}_0^1(\lambda,\mu)&=&\begin{bmatrix}
      A^1\lambda+B^1K\mu & \ldots & A^\eta\lambda+B^\eta K\mu
    \end{bmatrix},\\
    \tilde{\mathcal{E}}_0^2(\lambda,\mu)&=&\begin{bmatrix}
      C_1^1\lambda+D^1K\mu & \ldots & C_1^\theta\lambda+D^\theta K\mu
    \end{bmatrix}.
  \end{array}
  \end{equation}
  Moreover, in such a case, the controller $K$ is given by
       \begin{equation}
         K=\begin{bmatrix}
           \lambda_1^{-1}\mu_1 & \ldots & \lambda_n^{-1}\mu_n
         \end{bmatrix}
       \end{equation} and the closed-loop system satisfies $||z_1||_{ L_\infty}\le\gamma ||w_1||_{ L_\infty}$.
\end{theorem}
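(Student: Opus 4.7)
The plan is to derive the stated conditions by combining the robust $L_\infty$-gain characterization of Theorem \ref{th:robLinf} with the change-of-variables linearization technique from Lemma \ref{th:stabunp}. First, I would form the closed-loop system by substituting $u(t)=Kx(t)$ into (\ref{eq:unsyst}) and then exhibit its LFR representation; by transposition, this yields precisely the augmented system (\ref{eq:LFRcl}), whose matrices $\mathcal{A}(K)$, $\mathcal{E}_0(K)$ and $\mathcal{E}_1(K)$ carry all of the $K$-dependence, while $\mathcal{C}_0$, $\mathcal{C}_1$ and the $\mathcal{F}_{ij}$ are $K$-independent (they inherit only the parameter blocks that do not multiply $u$).

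Next, since by Proposition \ref{prop:trans} the $L_\infty$-gain of the closed-loop system equals the $L_1$-gain of its transpose (\ref{eq:LFRcl}), I would apply Theorem \ref{th:robL1} to (\ref{eq:LFRcl}). This produces three robust inequalities involving the terms $\lambda^T\mathcal{A}(K)$, $\lambda^T\mathcal{E}_0(K)$ and $\lambda^T\mathcal{E}_1(K)$, together with the ILC $\varphi_1(\delta)^T+\varphi_2(\delta)^T\Delta(\delta)\ge 0$. All three state/output expressions are bilinear in $\lambda$ and $K$. The linearization step (as in the proof of Lemma \ref{th:stabunp}) is to introduce the new variables $\mu_i=[K]_{c,i}\lambda_i$, so that $K\lambda=\sum_{i=1}^n\mu_i=\mu$. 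Under this substitution each product $\lambda^T(A^i+B^iK)^T=(A^i+B^iK)\lambda$ collapses to the linear expression $A^i\lambda+B^i\mu$, and similarly for the blocks involving $C_1^i$ and $D^i$ that make up $\mathcal{E}_0(K)$, yielding the first three inequalities of (\ref{eq:stabzlol}) together with the definition of $\tilde{\mathcal{E}}_0(\lambda,\mu)$.

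Finally, the above analysis is legitimate only when the closed-loop system is itself positive for every $\delta\in\deltab$. This means that $A_\delta(\delta)+B_\delta(\delta)K$ must be Metzler and $C_\delta(\delta)+D_\delta(\delta)K$ must be nonnegative, uniformly in $\delta$. Rewriting these entrywise constraints in the new variables $(\lambda,\mu_j)$, along the lines of \cite{Aitrami:07} and of Lemma \ref{th:stabunp}, produces exactly the semi-infinite constraints (\ref{eq:semiinfcst}). Once joint feasibility of (\ref{eq:stabzlol})-(\ref{eq:semiinfcst}) is established, the controller is recovered through the inverse change of variable $K=\begin{bmatrix}\lambda_1^{-1}\mu_1 & \ldots & \lambda_n^{-1}\mu_n\end{bmatrix}$, which is well defined since $\lambda\in\mathbb{R}^n_{++}$.

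The principal obstacle I anticipate is the bookkeeping in the second step: one has to verify that every bilinear term in $\lambda$ and $K$ appearing in the transposed closed-loop LFR is of the ``column-wise'' form $BK\lambda$ or $DK\lambda$, so that the single change of variable $\mu_i=[K]_{c,i}\lambda_i$ simultaneously linearizes all of them across the blocks indexed by $\eta$ and $\theta$ that encode the parameter dependencies in the original LFR. This is essentially a structural consistency check, but it is the crucial place where the LFR construction must cooperate with the state-feedback structure in order to preserve convexity. No new conservatism is introduced by the scalings $\varphi_1(\delta),\varphi_2(\delta)$, since they enter linearly in both (\ref{eq:stabzlol}) and the ILC, which is the key reason the joint stabilization/robustness problem remains a robust linear program.
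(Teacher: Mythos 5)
Your argument is exactly the route the paper intends: its own proof of this theorem is a one-line reference to Section \ref{sec:stabiunp}, i.e., apply the transposed closed-loop LFR and dissipativity machinery of Theorem \ref{th:robLinf} and linearize the bilinear terms with the change of variables $\mu_i=[K]_{c,i}\lambda_i$ from Lemma \ref{th:stabunp}, appending the entrywise Metzler/nonnegativity constraints (\ref{eq:semiinfcst}) to guarantee closed-loop positivity for all $\delta\in\deltab$. Your expanded version is correct and simply fills in the details the paper leaves implicit.
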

\begin{proof}
  The proof is similar to the one of the results in Section \ref{sec:stabiunp}.
\end{proof}

The constraints (\ref{eq:stabzlol}) are polynomial in the uncertain parameters $\delta$. This is however not the case of the constraints (\ref{eq:semiinfcst}) that are rational when the system is rational. They can however be easily turned into polynomial constraints by finding a common denominator for the left-hand side. Since the sign of the common denominator is fixed (otherwise the system would be ill-posed), the rational constraints reduce to polynomial constraints on the numerators.

\section{Solving robust linear programs}\label{sec:solvingunc}

In this section, we address the problem of solving the robust linear optimization problems arising in Theorems \ref{th:robL1}, \ref{th:robLinf} and \ref{th:stabilizationz}. A solving scheme based on Handelman's Theorem \cite{Handelman:88} is proposed.

\subsection{Handelman's Theorem}

For completeness, let us first recall Handelman's Theorem:
\begin{theorem}[Handelman's Theorem]
  Assume $\mathcal{S}$ is a compact polytope in the Euclidean $N$-space defined as ${\mathcal{S}:=\left\{x\in\mathbb{R}^N:\ g_i(x)\ge0,\ i=1,\ldots\right\}}$ where the $g_i(x)$'s are linear forms. Assume also that $P$ is a polynomial in $N$ variables which is positive on $\mathcal{S}$, then $P$ can be expressed as a linear combination with nonnegative coefficients (not all zero) of products of members of $\{g_i\}$.
\end{theorem}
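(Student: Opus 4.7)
The plan is to reduce to Pólya's theorem from real algebraic geometry: if $F(y_1,\ldots,y_n)$ is a homogeneous polynomial of degree $d$ strictly positive on the standard simplex $\Sigma=\{y_i\ge 0,\,y_1+\cdots+y_n=1\}$, then $(y_1+\cdots+y_n)^M F(y)$ has all nonnegative coefficients for $M$ large enough. This is precisely Handelman's conclusion in the special case where $\mathcal{S}$ is the simplex and the $g_i$'s are the coordinate forms, so the remaining task is to transport the simplex case to an arbitrary compact polytope.

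First I would homogenize. Introduce an auxiliary variable $x_0\ge 0$ and replace each affine form $g_i(x)=a_i^Tx+b_i$ by the linear form $\tilde g_i(x_0,x)=a_i^Tx+b_ix_0$. The polytope $\mathcal{S}$ appears as the slice $\{x_0=1\}$ of the polyhedral cone $\widetilde{\mathcal{S}}=\{(x_0,x):x_0\ge 0,\ \tilde g_i(x_0,x)\ge 0\ \forall i\}$. Homogenize $P$ to a form $\tilde P$ of degree $\deg P$ in $(x_0,x)$; it is strictly positive on $\widetilde{\mathcal{S}}\setminus\{0\}$. Any representation of $\tilde P$ as a nonnegative linear combination of products of the $\tilde g_i$'s specializes at $x_0=1$ to the sought representation of $P$, so from here on one works with homogeneous data.

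Second I would change coordinates. Since $\mathcal{S}$ is compact with nonempty relative interior, one can select $N+1$ linearly independent forms among the $\tilde g_i$'s whose nonnegativity defines a simplicial cone containing $\widetilde{\mathcal{S}}$, and take these as new coordinates $y_0,\ldots,y_N$. In these coordinates $\tilde P$ becomes a form $Q(y)$ that is strictly positive on the intersection of the standard simplex $\Sigma$ with the remaining constraints $\tilde g_\ell\ge 0$. Applying Pólya to $Q$ on $\Sigma$ yields $(\sum y_j)^M Q(y)=\sum_\alpha c_\alpha y^\alpha$ with $c_\alpha\ge 0$, and pulling back produces a nonnegative combination of products of the selected $\tilde g_{i_j}$'s representing $\tilde P$ (up to an overall positive factor $(\sum y_j)^M$, which itself is expressible via the selected $\tilde g_{i_j}$'s).

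The principal obstacle is the presence of redundant facets, i.e.\ the $\tilde g_\ell$'s not chosen as coordinates: these need not appear in a direct Pólya output, even though Handelman's conclusion allows them in the representation. I would handle this by induction on the number of redundant constraints, using the margin afforded by the strict positivity of $P$ on $\mathcal{S}$: at each inductive step one subtracts from $\tilde P$ a carefully chosen nonnegative multiple of one extra $\tilde g_\ell$, verifies that the remainder is still strictly positive on the polytope obtained by dropping that facet, and invokes the inductive hypothesis on the reduced problem. Reassembling these pieces yields the final representation. The delicate point is to ensure that strict positivity is preserved at every stage so that Pólya remains applicable, which is the technical heart of the argument.
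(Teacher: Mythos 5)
The paper does not prove this statement at all: it is recalled verbatim from \cite{Handelman:88} as an external result, so there is no internal proof to compare against. Your proposal must therefore stand on its own, and it does not: there is a genuine gap exactly where you place the ``technical heart.'' P\'olya's theorem requires the form $Q$ to be strictly positive on the \emph{entire} standard simplex $\Sigma$, but after your change of coordinates $Q$ is only known to be positive on $\Sigma$ intersected with the remaining constraints $\tilde g_\ell\ge 0$ --- i.e.\ on the image of the original polytope, which is a proper sub-polytope of $\Sigma$ whenever $\mathcal{S}$ has more than $N+1$ facets (a square in the plane already exhibits this). On $\Sigma$ outside that sub-polytope, $Q$ may well be negative, so P\'olya simply does not apply. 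Calling the unselected facets ``redundant'' obscures this: they are not redundant, they are doing real work in cutting the simplex down to $\mathcal{S}$.

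Your proposed remedy --- subtract a nonnegative multiple $\sigma\,\tilde g_\ell$ so that the remainder is positive on the enlarged polytope obtained by dropping the facet $\tilde g_\ell\ge 0$ --- is precisely the hard step, and the obvious instantiation fails. If $\sigma=c$ is a constant, then on the enlarged region where $P\le 0$ one has $\tilde g_\ell\le-\eps$ by compactness, so $P-c\tilde g_\ell\ge P+c\eps>0$ for $c$ large; but on the interior of the original polytope, where $\tilde g_\ell>0$, the same large $c$ drives $P-c\tilde g_\ell$ negative. Escaping this tension requires either polynomial multipliers drawn from the very preprime one is trying to describe (circular as stated), or the Archimedean-preprime machinery (Krivine/Kadison--Dubois style representation theorems) that constitutes the actual known proofs of Handelman's theorem. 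Note also that the implication established in \cite{Powers:01,Scheiderer:09} and cited in the paper runs in the opposite direction --- Handelman implies P\'olya --- which is consistent with the fact that the simplex case is equivalent to P\'olya but the general polytope case genuinely requires more. As written, your argument proves Handelman's theorem only when $\mathcal{S}$ is a simplex.
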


The above theorem states a necessity result regarding the positivity of a polynomial over a compact polytope: if it is positive, then we can write it as a linear combination of products of the linear functions defining $\mathcal{S}$. In \cite{Powers:01,Scheiderer:09} it is shown that Handelman's Theorem implies P\'{o}lya's Theorem \cite{Hardy:78}. Sufficiency is immediate following similar arguments as for the S-procedure \cite{Boyd:94a} or sum-of-squares techniques \cite{Parrilo:00}. A linear combination of positive polynomials on $\mathcal{S}$ (the products of $g_i$'s are positive on $\mathcal{S}$) is indeed a positive polynomial on $\mathcal{S}$. Hence, with a suitable choice of the product terms, it is possible to determine whether a polynomial is positive over a compact polytope of the Euclidian space. To illustrate the above statement, let us consider the following example:
\begin{example}\label{ex:handel}
Suppose we would like to characterize all the univariate polynomials $p(x)$ of at most degree 2 that are nonnegative on the interval $[-1,1]$. The basis functions of this interval are given by $g_1(x)=x+1$ and $g_2(x)=1-x$. According to Handelman's Theorem, we know that all such polynomials write as a linear combination of all possible products $g_1(x)^ig_2(x)^j$ with $1\le i+j\le2$. Hence, we have
  \begin{equation}
  \begin{array}{lcl}
        p(x)&=&\tau_1g_1(x)+\tau_2g_2(x)+\tau_3g_1(x)g_2(x)+\tau_4g_1(x)^2+\tau_5g_2(x)^2\\
        &=&\chi_2x^2+\chi_1x+\chi_0
  \end{array}
  \end{equation}
  where $\tau_i\ge0$, $i=1,\ldots,5$ and
  \begin{equation}\label{eq:mdrlol2}
    \begin{array}{lcl}
      \chi_2&=&\tau_4+\tau_5-\tau_3,\\
      \chi_1&=&\tau_1-\tau_2+2\tau_4-2\tau_5,\\
      \chi_0&=&\tau_1+\tau_2+\tau_3+\tau_4+\tau_5.
    \end{array}
  \end{equation}
  Hence, we can conclude on the very general statement that any univariate polynomial of at most degree 2 that is nonnegative on a bounded interval of the form $[\alpha,\beta]$ can be viewed as a point $\tau=(\tau_1,\ldots,\tau_5)\in\mathbb{R}_+^5$.
\end{example}


\subsection{Equivalent relaxations for robust linear programs}

For simplicity, we will focus here on linear optimization problems depending on a single uncertain parameter. The results straightforwardly generalize to the case of multiple uncertain parameters at the expense of complex notations. Let us consider the following semi-infinite feasibility problem
\begin{problem}\label{pb:pb}
  There exists $x\in\mathbb{R}^\eta$ such that the inequality
  \begin{equation}
    P(x,\theta):=\sum_{j=0}^{d}P_j(x)\theta^{j}\le0
  \end{equation}
  holds for all ${\theta\in\mathcal{P}:=\mathcal{P}=\left\{\theta\in\mathbb{R}:g_i(\theta)\ge0,\ i=1,\ldots,\ g_i's\ \mathrm{linear}\right\}}$ and where the vectors $P_j(x)\in\mathbb{R}^{N_P}$, $j=0,\ldots,d$, are linear in $x$.
\end{problem}
We then have the following result:
\begin{theorem}
  The following statements are equivalent:
  \begin{enumerate}
    \item[i)] Problem \ref{pb:pb} is feasible.
    \item[ii)] There exist an integer $b>0$ and vectors\footnote{The notation $y$ is here to emphasize that the vectors $Q_k$'s consist exclusively of independent additional decision variables.} $Q_k(y)\in\mathbb{R}^{N_P}$, $k=1,\ldots,b$, such that the finite-dimensional linear program
          \begin{subequations}\label{eq:relax1}
            \begin{eqnarray}
                P_j(x)&=&Z_j,\ j=0,\ldots,d\label{eq:relax1_1}\\
                Q_k(y)&\le&0,\ k=0,\ldots,b\label{eq:relax1_2}
            \end{eqnarray}
        \end{subequations}
        is feasible in $(x,y)\in\mathbb{R}^\eta\times\mathbb{R}_+^{N_p(b+1)}$ where $P(x,\theta) = \sum_{k=0}^{b}Q_k(y)g_i(\theta)^k$, ${Z_j:=\sum_{k=0}^{b}\upsilon_{jk}Q_k(y)}$ and the $\upsilon_{jk}$'s depend on the coefficients of the basis functions $g_i$'s.
    \item[iii)] There exist an integer $b>0$ and vectors $R_k(z)\in\mathbb{R}^{N_P}$, $k=0,\ldots,b-d$, such that the finite-dimensional linear program
        \begin{subequations}\label{eq:relax2}
            \begin{align}
            R_k(z)&\le0,\ k=0,\ldots,d\label{eq:relax2_1}\\
             \Upsilon_2^{-1}\left(\begin{bmatrix}
              P_0(x)\\
              \vdots\\
              P_{d}(x)
            \end{bmatrix}-\Upsilon_1\begin{bmatrix}
              R_0(z)\\
              \vdots\\
              R_{b-d}(z)
            \end{bmatrix}\right)&\le0.\label{eq:relax1_2}
            \end{align}
        \end{subequations}
        is feasible in $(x,z)\in\mathbb{R}^\eta\times\mathbb{R}_+^{N_p(b-d+1)}$ where $\Upsilon:=\begin{bmatrix}
          \Upsilon_1 & \Upsilon_2
        \end{bmatrix}=[\upsilon_{ij}]$, $\Upsilon_1\in\mathbb{R}^{(d+1)\times(b-d)}$, and $\Upsilon_2\in\mathbb{R}^{(d+1)\times (d+1)}$.
  \end{enumerate}
\end{theorem}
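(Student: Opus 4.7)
The plan is to establish (i) $\Leftrightarrow$ (ii) directly via Handelman's Theorem applied componentwise, and then (ii) $\Leftrightarrow$ (iii) by a linear-algebraic elimination of a subset of the Handelman coefficients. The implication (ii) $\Rightarrow$ (i) is the easy direction: given a feasible $(x,y)$, the equalities (\ref{eq:relax1_1}) encode the polynomial identity $P(x,\theta) = \sum_{k} Q_k(y) \prod_i g_i(\theta)^{\alpha_{i,k}}$, and since each basis product is nonnegative on $\mathcal{P}$ while each $Q_k(y)$ is nonpositive by the sign constraints of (\ref{eq:relax1}), the sum is nonpositive on $\mathcal{P}$, as required.

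The nontrivial direction is (i) $\Rightarrow$ (ii). Here I would apply Handelman's Theorem componentwise to $-P(x,\cdot)$ on the compact polytope $\mathcal{P}$. Since Handelman's hypothesis is strict positivity, I would first replace the non-strict condition $P(x,\theta)\le 0$ by a perturbed strict version via a standard $\varepsilon$-argument exploiting compactness of $\mathcal{P}$ and continuity of $P(x,\cdot)$ in $\theta$; this is harmless for the downstream applications, whose hypotheses in Theorems \ref{th:robL1}, \ref{th:robLinf}, \ref{th:stabilizationz} are already strict. For each of the $N_P$ scalar entries of $-P(x,\cdot)$, Handelman yields a representation as a nonnegative combination of products of the $g_i$'s of degree at most some bound; enlarging this bound to a common value $b$ across all components and packaging the resulting coefficients into vector-valued $-Q_k(y)$'s, I then equate coefficients of $\theta^j$ on both sides. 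This yields precisely (\ref{eq:relax1_1}), with the scalars $\upsilon_{jk}$ being those that express each basis product back in the monomial basis of $\theta$, and the corresponding sign constraints in (\ref{eq:relax1}) are the Handelman nonnegativity conditions after the sign flip.

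For (ii) $\Leftrightarrow$ (iii), the manipulation is purely algebraic. The equality block (\ref{eq:relax1_1}) reads, in block-matrix form,
\[
\begin{bmatrix} P_0(x) \\ \vdots \\ P_d(x) \end{bmatrix} = \Upsilon \begin{bmatrix} Q_0(y) \\ \vdots \\ Q_b(y) \end{bmatrix},
\]
with $\Upsilon = [\upsilon_{jk}]$ of size $(d+1)\times(b+1)$ acting blockwise on the $N_P$-valued $Q_k$'s. Partitioning $\Upsilon = [\Upsilon_1 \; \Upsilon_2]$ with $\Upsilon_2$ a square $(d+1)\times(d+1)$ block chosen nonsingular, I can solve explicitly for the ``tail'' block of $Q_k$'s as $\Upsilon_2^{-1}(P - \Upsilon_1 R)$, where the ``head'' block, now free, is renamed $R$. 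Reimposing the sign constraints on both groups yields the two inequalities of (\ref{eq:relax2}), and reversing the substitution recovers the original $y$ variables, giving a bijection between feasible $(x,y)$ in (ii) and feasible $(x,z)$ in (iii).

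The principal obstacle is twofold. First, the strictness mismatch between Handelman's hypothesis and the non-strict statement of Problem \ref{pb:pb}, which I would resolve by the perturbation argument sketched above. Second, the invertibility of $\Upsilon_2$ in the partition reduces to a span condition on the Handelman basis products viewed as polynomials in $\theta$ of degree at most $d$; this is always achievable for $b \ge d$ by a suitable reindexing, since a monomial basis of polynomials of degree $\le d$ in $\theta$ can be extracted from the product basis. Both obstacles are mild and do not alter the skeleton of the argument.
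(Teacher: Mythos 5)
Your proof follows essentially the same route as the paper: i)\,$\Leftrightarrow$\,ii) via a componentwise application of Handelman's Theorem, and ii)\,$\Leftrightarrow$\,iii) by writing the equality constraints in block form, partitioning $\Upsilon=\begin{bmatrix}\Upsilon_1 & \Upsilon_2\end{bmatrix}$ with $\Upsilon_2$ nonsingular (justified by the full-row-rank of $\Upsilon$), solving for one block of the $Q_k$'s and renaming the remaining free block as the $R_k$'s. The only difference is that you spell out the Handelman step --- including the strict-versus-nonstrict positivity mismatch and the $\varepsilon$-perturbation needed to resolve it --- which the paper dismisses as ``an immediate consequence of Handelman's Theorem''; that is a genuine subtlety worth recording, not a deviation from the argument.
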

\begin{proof}
  The equivalence between statements i) and ii) is an immediate consequence of Handelman's Theorem. The equivalence between statements ii) and iii) follows from simple algebraic manipulations allowing to reduce the number of additional decision variables. First remark that the equality constraints can be compactly written as
\begin{equation*}
  \begin{bmatrix}
              P_0(x)\\
              \vdots\\
              P_{d}(x)
  \end{bmatrix}=\Upsilon\begin{bmatrix}
              Q_0(y)\\
              \vdots\\
              Q_{b}(y)
  \end{bmatrix}.
\end{equation*}
Note that the matrix $\Upsilon$ is full-row rank, otherwise it would not be possible to characterize independent polynomial coefficients. Using the decomposition $\Upsilon=\begin{bmatrix}
  \Upsilon_1 & \Upsilon_2
\end{bmatrix}$ where $\Upsilon_2\in\mathbb{R}^{(d+1)\times (d+1)}$ is w.l.o.g. a nonsingular matrix, the equality constraints can then be solved to get
\begin{equation}
\begin{bmatrix}
Q_{1}(y)\\
\vdots\\
Q_{d}(y)
\end{bmatrix}=\Upsilon_2^{-1}\left(\begin{bmatrix}
              P_0(x)\\
              \vdots\\
              P_{d}(x)
            \end{bmatrix}-\Upsilon_1\begin{bmatrix}
              Q_{d+1}(y)\\
              \vdots\\
              Q_{b}(y)
            \end{bmatrix}\right).
\end{equation}
Since the $Q_i(y)$'s are nonnegative vectors, this is then equivalent to say that the right-hand side of the above equality is nonnegative. Finally, posing $R_k=Q_{d+k}$, $k=1,\ldots b-d$, we obtain the feasibility problem of statement iii). The opposite implication is obtained by reverting the reasoning. The proof is complete.
\end{proof}

In the light of the above result, it turns out that it is always possible to represent a robust linear program with polynomial dependence on uncertain parameters within a compact polytope as a more complex finite dimensional linear program. Handelman's Theorem yields a linear program involving $N_P(b+1)+\eta$ variables, $N_P(d+1)$ equality constraints and $N_P(b+1)$ inequality constraints. After reduction, the problem has $N_P(b-d)+\eta$ decision variables and $N_P(b+2)$ inequality constraints. The complexity of the problem has then be reduced. The same reasoning applies to the case of multiple uncertainties.

The question of choosing which products of basis functions to consider is a difficult problem. A brute force approach would consider all the possible products up to a certain degree. This is an easy task in the case of univariate polynomials. This is however more problematic when considering multivariate polynomials since the number of basis functions $b$ can be very large. A bound on the necessary order for the products has been provided in \cite{Powers:01} and generalizes straightforwardly to vector polynomials.

\section{Examples}\label{sec:ex}

\subsection{ILCs for several classes of uncertainties}


ILCs for some common operators are discussed below.\vspace{3mm}

\subsubsection{Uncertain SISO tranfer function}\ \\

\vspace{-3mm}\noindent Consider now an uncertain asymptotically stable positive SISO proper transfer function $\widehat{\Sigma}(s,\rho)$ depending on constant uncertain parameters $\rho\in[0,1]^N$. By virtue of Corollary \ref{cor:freq}, the ILC can be expressed as  $\phi_1^T+\phi_2^TZ\ge0$ where $Z\in\left\{\widehat{\Sigma}(0,\rho):\ \rho\in[0,1]^N\right\}$.\vspace{3mm}

\subsubsection{Multiplication operator}\label{ex:3e}\ \\

\vspace{-3mm}\noindent Let us consider in this example the multiplication operator $\Sigma$ which multiplies the input signal by a bounded and time-varying parameter $\delta(t)$ varying arbitrarily within its range of values, i.e. $z(t)=\Sigma(w)(t)=\delta(t)w(t)$. Since the parameter is time-varying, the time-domain version of the ILC must be considered
  \begin{equation}
    \int_0^{+\infty}[\varphi_1^T+\varphi_2^T\delta(\theta)]w(\theta)d\theta\ge0.
  \end{equation}
A necessary and sufficient condition is given by $\varphi_1^T+\varphi_2^T\delta(t)\ge0$ for all $t\ge0$ where the scalings $\varphi_1$ and $\varphi_2$ are chosen according to the range of values of $\delta(t)$. Note that it is also possible to select polynomial scalings verifying $\varphi_1(\delta)=-\delta\varphi_2(\delta)$ in order to saturate the ILC and obtain a better characterization of the uncertainty set.\vspace{3mm}

\subsubsection{Uncertain infinite dimensional system}\ \\

\vspace{-3mm}\noindent The same reasoning  also applies to general asymptotically-stable positive LTI infinite-dimensional systems governed by partial differential equations. Let us consider the heat equation given by
\begin{equation*}
\begin{array}{rcl}
    \dfrac{\partial u}{\partial t}&=&\omega^2\dfrac{\partial^2 u}{\partial x^2}\\
    u(0,t)&=&w(t)\\
    z(t)&=&u(1,t)
\end{array}
\end{equation*}
where $u(x,t)$ is the state of the system, $x\in[0,1]$ the space variable,  $w(t)$ the input, $z(t)$ the output and $\omega>0$ an uncertain parameter of the system. The transfer function of the system is given by $G(s)=\alpha e^{-\sqrt{s}/\omega}+\beta e^{\sqrt{s}/\omega}$ where $\alpha$ and $\beta$ are real constants determined according to initial and boundary conditions. The static-gain $G(0)=\alpha+\beta$ is independent of $\omega$ and, according to Corollary \ref{cor:freq}, it is enough to choose $\varphi_1=-\varphi_2(\alpha+\beta)$. Stability of interconnections of LTI positive finite-dimensional systems and the heat-equation can hence be proved regardless of $\omega\in[0,+\infty)$.\vspace{3mm}

\subsubsection{Constant delay operator}\ \\

\vspace{-3mm}\noindent Let us consider the constant delay operator $\widehat{\Sigma}(s)=e^{-sh}$, $h\ge0$, which is indeed a positive operator. The static-gain $\widehat{\Sigma}(0)$ is equal to 1 regardless of the value of $h$. The ILC can hence be easily saturated by choosing  $\varphi_1^T=-\varphi_2^T$. It is shown in Section \ref{ex:delaystab} that the consideration of such scalings leads to a necessary and sufficient condition for the stability of positive time-delay systems with constant delays, recovering then the results of \cite{Haddad:04}.\vspace{3mm}

\subsubsection{Time-varying delay operator}\label{ex:4e}\ \\

\vspace{-3mm}\noindent Let us consider now the time-varying delay operator $\Sigma$ defined as $z(t)=\Sigma(w)(t)=w(t-h(t))$ where $h(t)\ge0$ and $\dot{h}(t)\le\mu<1$ for all $t\ge0$. In such a case, the operator can only be characterized through its $L_1$-gain given by $(1-\mu)^{-1}$ under the standard assumption of 0 initial conditions \cite{GuKC:03}. A suitable ILC for this operator is then given by $$\int_0^\infty\varphi^T\left(w(t)-(1-\mu)z(t)\right)dt\ge0$$
with $\varphi>0$. Hence $\varphi_1=\varphi$ and $\varphi_2=-(1-\mu)\varphi$.\vspace{3mm}

\subsection{Computation of Norms}
%

For this example, many linear positive systems have been randomly generated and their induced-norms computed on a laptop equipped with an Intel U7300 processor of 1.3GHz with 4GB of RAM. The mean computation time and the standard deviation for different systems are gathered in Table \ref{tab:comptime}. Note that the number of variables is $n+1$ and the number of constraints is $2n+p+1$ and $2n+1+q$ for the $ L_1$-gain and the $ L_\infty$-gain respectively. Since the number of constraints is larger for the $L_\infty$-gain, it is expected that it takes longer to compute. It seems important to note that the the induced-norms are different since the system is not SISO, so the theoretical analysis in \cite{Rantzer:11} does not hold here.

  \begin{table*}
  \centering
  \begin{tabular}{c|c||c|c}
    nb. of systems & $(n,p,q)$ & $ L_1$-gain & $ L_\infty$-gain\\
    \hline
    \hline
    $20$ & (300,100,150) & $\mu=12.282$, $\sigma=1.1406$ & $\mu=14.186$, $\sigma=1.4151$\\
    $100$ & $(50,20,30)$ & $\mu=0.53973$, $\sigma=0.27486$ & $\mu=0.50735$, $\sigma=0.080446$
  \end{tabular}
  \caption{Mean computation time $\mu$ [sec] and standard deviation $\sigma$ [sec] for gain computation}\label{tab:comptime}
\end{table*}

\subsection{Example 2: Drug distribution}

A frequently used model to analyze the distribution or flow of a drug
or a tracer through the human body after injection into
the bloodstream is given by the following compartmental model \cite{Haddad:10}:
\begin{equation}\label{ex:drug}
  \dot{x}(t)=\begin{bmatrix}
    -(a_{11}+a_{21}) & a_{12}\\
    a_{21} & -a_{12}
  \end{bmatrix}x(t)+\begin{bmatrix}
    1\\
    0
  \end{bmatrix}u(t)
\end{equation}
where $x_1$, $x_2$ and $u$ are the compartment corresponding to the blood plasma, the extravascular space (e.g. tissue) and the drug/tracer injection. Here, the drug or the tracer is injected directly into the bloodstream and the drug is evacuated by the kidneys at rate $a_{11}>0$. The scalars $a_{12},a_{21}>0$ are the transmission coefficients for the drug/tracer between the bloodstream and the extravascular space. Due to the sign pattern, the system is positive and asymptotically stable. When the system is SISO the $L_1$-gain coincides with the $L_\infty$-gain and they are both equal to the static-gain of the system. Different gains for different output matrices are given in Table \ref{tab:drug}. 

\begin{table}
\centering
  \begin{tabular}{c||c|c}
    Output matrix $C$ & $L_1$-gain & $L_\infty$-gain\\
    \hline
    \hline
    \vspace{-3mm} \\
    $C=\begin{bmatrix}
      1 & 0
    \end{bmatrix}$ & $\frac{1}{a_{11}}$ & $\frac{1}{a_{11}}$\\
    \vspace{-3mm} \\
        $C=\begin{bmatrix}
      0 & 1
    \end{bmatrix}$ & $\frac{a_{21}}{a_{11}a_{12}}$ & $\frac{a_{21}}{a_{11}a_{12}}$\\
    $C=\diag(k_1,k_2)$ & $\frac{|k_1|}{a_{11}}+\frac{|k_2|a_{21}}{a_{11}a_{12}}$ & $\max\left\{\frac{|k_1|}{a_{11}},\frac{|k_2|a_{21}}{a_{11}a_{12}}\right\}$
  \end{tabular}
  \caption{Gains of system (\ref{ex:drug}) for different output matrices}\label{tab:drug}
\end{table}

\subsection{Theoretical Robustness analysis - Time-delay systems}\label{ex:delaystab}

We illustrate here the nonconservativeness of Theorem \ref{th:noncons} made possible by the saturation of the ILC condition. The case of time-delay systems is addressed.\vspace{3mm}

\subsubsection{Constant time-delay}\ \\

\vspace{-3mm}\noindent Let us consider the linear positive system with constant time-delay:
\begin{equation}
  \dot{x}(t)=Ax(t)+A_hx(t-h)
\end{equation}
for some $h\ge0$. It is well known that such a system is positive if and only if the matrix $A$ is Metzler and the matrix $A_h$ is nonnegative. Rewriting it in an LFT form we get
\begin{equation}
  \begin{array}{lcl}
    \dot{x}(t)&=&Ax(t)+A_hw_0(t)\\
    z_0(t)&=&x(t)\\
    w_0(t)&=&\nabla_c(z_0)(t)
  \end{array}
\end{equation}
where $\nabla_c$ is the constant delay operator with Laplace transform $\widehat{\nabla}_c(s)=e^{-sh}$. Since the static-gain of the delay operator is equal to 1, we can apply Theorem \ref{th:noncons} which yields the stability conditions
\begin{equation}
  \begin{array}{rcl}
    \lambda^TA+\varphi_1^T&<&0,\\
    \lambda^TA_h+\varphi_2^T&<&0,\\
    \varphi_1^T+\varphi_2^T&=&0
  \end{array}
\end{equation}
that must be feasible for some $\lambda\in\mathbb{R}_{++}^n$ and $\varphi_1,\varphi_2\in\mathbb{R}^n$. They are equivalent to the conditions
\begin{equation}\label{eq:split}
  \lambda^TA+\varphi_1^T<0\ \mathrm{and}\ \lambda^TA_h-\varphi_1^T<0
\end{equation}
which are in turn equivalent\footnote{To see the equivalence, note that (\ref{eq:split}) implies (\ref{eq:fusion}) by summation. To prove the converse, assume (\ref{eq:fusion}) holds and defining $\varphi_1^T=\lambda^TA_h-\lambda^T(A+A_h)/2>0$ in (\ref{eq:split}) makes the conditions negative.} to the inequality
\begin{equation}\label{eq:fusion}
  \lambda^T(A+A_h)<0
\end{equation}
well-known to be a necessary and sufficient condition for asymptotic stability of positive time-delay systems \cite{Haddad:04, Kaczorek:09}. Note also that this condition is dual to the one derived in \cite{AitRami:09}. By solving for $\lambda$ instead of $\varphi_1$, we get the condition
\begin{equation}\label{eq:temp_2}
  -\varphi_1^TA^{-1}A_h<\varphi_1^T,\ \varphi_1>0
\end{equation}
which is a linear counterpart of the spectral radius condition for the stability of linear time-delay systems, see e.g. \cite{GuKC:03}.\vspace{3mm}

\subsubsection{Time-varying delay}\ \\

\vspace{-3mm}\noindent Consider now the time-varying delay operator which has been considered in the case of positive systems in \cite{AitRami:09}. The $L_1$-gain of time-varying operator is $(1-\mu)^{-1}$, $\mu<1$ as discussed in Section \ref{ex:4e} . In such a case, the stability conditions are given by
\begin{equation}\label{eq:temp_1}
  \begin{array}{rcl}
    \lambda^TA+\varphi^T&<&0,\\
    \lambda^TA_h-(1-\mu)\varphi^T&<&0
  \end{array}
\end{equation}
for some $\lambda,\varphi\in\mathbb{R}_{++}^n$.
Identical conditions can be obtained using a linear Lyapunov-Krasovskii functional \cite{Kaczorek:09} of the form $ V(x_t)=\lambda^Tx(t)+\varphi^T\int_{t-h(t)}^tx(s)ds$
with $\lambda,\varphi>0$. Note that in the robust formulation, (\ref{eq:temp_1}) is feasible only if $\varphi>0$. Note also that the conditions can be merged into the novel single inequality
\begin{equation}
  \lambda^T((1-\mu)A+A_h)<0
\end{equation}
where we can see that the time-varying delay penalizes the stability by scaling down the matrix $A$. The time-invariant case is retrieved for $\mu=0$.

While the above results are based on the $L_1$-gain, it seems interesting to analyze the stability using the $L_\infty$-gain. The $L_\infty$-gain of the time-varying delay operator is equal to 1 \cite{Briat:11j}. We may then use a small-gain argument to prove asymptotic stability of the time-delay system by applying a scaled version of Lemma \ref{lem:Linf}. Hence, the time-delay system is asymptotically stable provided that there exist $\lambda,\varphi\in\mathbb{R}_{++}^n$ such that the conditions
\begin{equation}
  \begin{array}{lclclcl}
    A\lambda+A_h\varphi&<&0&\mathrm{and}&\lambda-\varphi^T&<&0
  \end{array}
\end{equation}
hold. They are equivalent to the condition $(A+A_h)\lambda<0$ (stability for zero delay) or the condition $-A^{-1}A_h\varphi<\varphi$ (small-gain condition) which is the $L_\infty$-gain counterpart of (\ref{eq:temp_2}). It is important to precise that the stability condition is identical to the one for systems with constant delay. This result is then much stronger than the $L_1$-based one since the delay-derivative bound does not have any negative impact on the stability. As discussed in \cite{Briat:11j}, while results using the $L_1$-norm relate to Lyapunov-Krasovskii functionals, those based on the $L_\infty$-norm connect to Lyapunov-Razumikhin functions. Note however that no Lyapunov-Razumikhin results for positive systems have been reported so far.


\subsection{Numerical robustness analysis - A systems biology example}

Let us consider the example of a simple gene expression process described by the following model \cite{Khammash:10}:
\begin{equation}
  \begin{bmatrix}
    \dot{x}_r(t)\\
    \dot{x}_p(t)\\
  \end{bmatrix}=\begin{bmatrix}
    -\gamma_r & 0\\
    k_p & -\gamma_p
  \end{bmatrix}\begin{bmatrix}
    x_r(t)\\
    x_p(t)\\
  \end{bmatrix}+\begin{bmatrix}
    1\\
    0
  \end{bmatrix}u(t)
\end{equation}
where $x_r\ge0$ is the mean number of mRNA in the cell, $x_p\ge0$ is the mean number of protein of interest in the cell. Above $u(t)\ge0$ is the transcription rate of DNA into mRNA, $\gamma_r>0$ is the degradation rate of mRNA, $k_p>0$ is the translation rate of mRNA into protein and $\gamma_p>0$ is the degradation rate of the protein. The parameters are assumed to be uncertain and given by $\gamma_r=\gamma_r^0+\eps_1\gamma_r^1$, $k_p=k_p^0+\eps_2k_p^1$ and $\gamma_p=\gamma_p^0+\eps_3\gamma_p^1$, where $\eps_i\in[-1,1]$, $i=1,2,3$. We are interested in analyzing the $L_\infty$-gain of the transfer from $u\to x_p$ over the set of all possible systems. In other terms, we would like to analyze the impact of the maximal value of the transcription rate to the maximal value of the mean number of proteins.

The overall system can be rewritten as
\begin{equation}\label{eq:uncbio}
  \begin{bmatrix}
    \dot{x}_r(t)\\
    \dot{x}_p(t)\\
  \end{bmatrix}=A_u(\eps)
  \begin{bmatrix}
    x_r(t)\\
    x_p(t)\\
  \end{bmatrix}+\begin{bmatrix}
    1\\
    0
  \end{bmatrix}u(t)
\end{equation}
where $\eps=\col(\eps_1,\eps_2,\eps_3)$ and
$$A_u(\eps)=\begin{bmatrix}
    -\gamma_r^0 & 0\\
    k_p^0 & -\gamma_p^0
  \end{bmatrix}+\eps_1\begin{bmatrix}
    -\gamma_r^1 & 0\\
    0 & 0
  \end{bmatrix}+\eps_2\begin{bmatrix}
    0 & 0\\
    k_p^1 & 0
  \end{bmatrix}+\eps_3\begin{bmatrix}
    0 & 0\\
    0 & -\gamma_p^1
  \end{bmatrix}.$$

Note that the theoretical computation of the $L_\infty$-gain is rather difficult since the static-gain depends rationally and in a nonconvex way on the uncertain parameter vector $\eps$. However, by exploiting the fact that the set $\mathscr{A}:=\left\{A_u(\eps):\eps\in[-1,1]^3\right\}$ is convex, the conditions  (\ref{eq:L1stability}), or equivalently (\ref{eq:Linfstability}), can be checked without using the LFT formulation\footnote{Remember that the interest of the LFT is to convexify the problem.}, although in this case both methods are equivalent provided that the scalings are chosen accordingly. Checking indeed the conditions (\ref{eq:L1stability}) over the whole set $\mathscr{A}$ is equivalent to checking them over the set of its vertices, i.e. $\mathscr{A}_v:=\left\{A_u(\eps):\eps\in\{-1,1\}^3\right\}$. This follows from standard convexity argument.

Let us consider $\gamma_r^0=1$, $k_p^0=2$ and $\gamma_p^0=1$ for numerical application. We also assume that the parameters are known up to a percentage $N\in[0,1)$ of their nominal value, hence $\gamma_r^1=N\gamma_r^0$, $\gamma_p^1=N\gamma_p^0$ and $k_p^1=Nk_p^0$. The results are summarized in Table \ref{tab:bio} where we can see that the proposed approach is rather accurate.

\begin{table}
\centering
  \begin{tabular}{c|c|c}
    $N$ & $L_\infty$-gain & Theoretical\\
    \hline
    \hline
    0 & 2 & 2\\
    0.1 & 2.7162 & 2.7161\\
    0.3 & 5.3063 & 5.3062\\
    0.5 & 12.0003 & 12.0000\\
    0.7 & 37.7783 & 37.7779
  \end{tabular}
  \caption{Computed and theoretical $L_\infty$-gains of the transfer $u\to x_p$ of the uncertain system (\ref{eq:uncbio})}\label{tab:bio}
\end{table}

\subsection{Numerical robustness analysis - A polynomial system example}\label{sec:numrobanal}

Let us consider the uncertain system with constant parametric uncertainty $\delta\in[0,1]$:
\begin{equation}
  \begin{array}{lcl}
    \dot{x}(t)&=&(A^0+\delta A^1+\delta^2A^2)x(t)+(E^0+\delta E^1+\delta^2E^2)w_1(t)\\
    z_1(t)&=&(C^0+\delta C^1+\delta^2C^2)x(t)+(F^0+\delta F^1+\delta^2F^2)w_1(t)
  \end{array}
\end{equation}
with the matrices
\begin{equation}
  \begin{array}{lclclclclcl}
    A^0&=&\begin{bmatrix}
      -10 & 2 & 4\\
      3 & -8 & 1\\
      2 & 1 & -5
    \end{bmatrix},&&A^1&=&\begin{bmatrix}
      1 & 0 & 2\\
      0 & 1 & 2\\
      -1 & 2 & -1
    \end{bmatrix},&&A^2&=&\begin{bmatrix}
      1 & -1 & -1\\
      1 & -1 & 0\\
      0 & 1 & -1
    \end{bmatrix},\\
    E^0&=&\begin{bmatrix}
      1 & 3\\
      3 & 0\\
      2 & 1
    \end{bmatrix},&&E^1&=&\begin{bmatrix}
      1 & 3\\
      1 & 1\\
      2 & 1
    \end{bmatrix},&&E^2&=&\begin{bmatrix}
      1 & 3\\
      0 & 1\\
      1 & 4
    \end{bmatrix},\\
    C^0&=&\begin{bmatrix}
      1 & 3 & 1\\
      2 & 0 & 1
    \end{bmatrix},&&C^1&=&\begin{bmatrix}
      1 & 0 & 2\\
      3 & 1 & 0
    \end{bmatrix},&&C^2&=&\begin{bmatrix}
      0 & 3 & 2\\
      1 & 4 & 1
    \end{bmatrix},\\
        F^0&=&\begin{bmatrix}
      2 & 1\\
      1 & 2
    \end{bmatrix},&&F^1&=&\begin{bmatrix}
      0 & 2\\
      1 & 0
    \end{bmatrix},&&F^2&=&\begin{bmatrix}
      1 & 1\\
      2 & 1
    \end{bmatrix}.
  \end{array}
\end{equation}
The system can be rewritten in the LFT form (\ref{eq:LFR}) with matrices $A_0=A^0$, $E_0=\begin{bmatrix}
      A^1 & A^2 & E^1 & E^2
    \end{bmatrix}$, $E_1=E^0$, $C_1=C^0$, $F_{10}=\begin{bmatrix}
      C^1 & C^2 & F^1 & F^2
    \end{bmatrix}$, $F_{11}=F^0$ and
\begin{equation}
  \begin{array}{lclclclclcl}
    C_0&=&\begin{bmatrix}
      I_n\\
      0_n\\
      0_{p\times n}\\
      0_{p\times n}
    \end{bmatrix},&&F_{00}&=&\begin{bmatrix}
      0_n & 0_n & 0_{n\times p} & 0_{n\times p}\\
      I_n & 0_n & 0_{n\times p} & 0_{n\times p}\\
      0_{p\times n} & 0_{p\times n} & 0_p & 0_p\\
      0_{p\times n} & 0_{p\times n} & I_p & 0_{p\times p}
    \end{bmatrix},&&F_{01}&=&\begin{bmatrix}
       0_{n\times p}\\
       0_{n\times p}\\
       I_p\\
       0_p
    \end{bmatrix}.
  \end{array}
\end{equation}
The matrices of the LFT form (\ref{eq:LFR2}) are given by $\bar{F}_{00}=F_{00}$, $\bar{C}_{00}=C_{0}$, $\bar{F}_{01}=F_{01}$, $\tilde{E}_0=\begin{bmatrix}
      A^{1T} & A^{2T} & C^{1T} & C^{2T}
    \end{bmatrix}$ and $\tilde{F}_{10}=\begin{bmatrix}
      E^{1T} & E^{2T} & F^{1T} & F^{2T}
    \end{bmatrix}$.

\begin{table}
\centering
  \begin{tabular}{c|c|c||c|c|}
    $\varphi_1(\delta)$ & $\varphi_2(\delta)$ & constraints & computed $ L_1$-gain & time [sec]\\
    \hline
    \hline
    $\varphi_1^0$ & $\varphi_2^0$ & $\varphi_1^0\ge0$, $\varphi_1^0+\varphi_2^0\ge0$ & 133.95 & 2.7844\\
     $\varphi_1^1\delta$& $\varphi_2^0$ & $\varphi_1^1=-\varphi_2^0$ & 133.95 & 3.829\\
    $\varphi_1^1\delta+\varphi_1^2\delta^2$ & $\varphi_2^0+\varphi_2^1\delta$ & $\varphi_1^1=-\varphi_2^0$, $\varphi_1^2=-\varphi_2^1$ & 94.167 & 4.2758\\
  \end{tabular}
\caption{$ L_1$-gain computation of the transfer $w_1\to z_1$ using Theorem \ref{th:robL1} -- Exact $ L_1$-gain: 92.8358}\label{tab:1}
\end{table}

\begin{table}
\centering
  \begin{tabular}{c|c|c||c|c|}
    $\varphi_1(\delta)$ & $\varphi_2(\delta)$ & constraints & computed $ L_1$-gain & time [sec]\\
    \hline
    \hline
    $\varphi_1^0$ & $\varphi_2^0$ & $\varphi_1^0\ge0$, $\varphi_1^0+\varphi_2^0\ge0$ & 86.195 & 0.68989\\
     $\varphi_1^1\delta$& $\varphi_2^0$ & $\varphi_1^1=-\varphi_2^0$ & 86.195 & 1.4629\\
    $\varphi_1^1\delta+\varphi_1^2\delta^2$ & $\varphi_2^0+\varphi_2^1\delta$ & $\varphi_1^1=-\varphi_2^0$, $\varphi_1^2=-\varphi_2^1$ & 82.025 & 1.7509\\
  \end{tabular}
\caption{$ L_\infty$-gain computation of the transfer $w_1\to z_1$ using Theorem \ref{th:robLinf} -- Exact $ L_\infty$-gain: 82.0249}\label{tab:inf}
\end{table}

Using Theorem \ref{th:robL1} and Theorem \ref{th:robLinf} (together with the complexity reduction technique of Section \ref{sec:solvingunc}) and different forms for the scalings $\varphi_1$ and $\varphi_2$, we obtain the results gathered in Tables \ref{tab:1} and \ref{tab:inf}. We can see that the $ L_1$-gain is not very well estimated compared to the $ L_\infty$-gain for parameter independent scalings. Using scalings of degree two considerably reduces the conservatism of the approach. Indeed, we are able to estimate accurately the $ L_\infty$-gain while little conservatism persists for the $ L_1$-gain. It is important to note that the above numerical results also hold in the case of time-varying parameters (see Section \ref{ex:3e}).

\section{Conclusion}

By relying on the fact that the $L_1$-gain and $L_\infty$-gain are related to the static-gain matrix of the positive system, a linear programming approach has been proposed to compute them under the assumption that the input signals are nonnegative. Interestingly, the results turn out to be also valid for inputs and states with no definite sign. Based on this result, stabilization conditions with performance constraints have been obtained for different classes of controllers. Using then the newly introduced Integral Linear Constraints, robust stability analysis results are provided together with generic robustness results, again relying on static-gains of positive systems. It is shown that the results are nonconservative in the case of LTI positive uncertainties with fixed static-gain matrix. Stabilization results are finally obtained and expressed as (robust) linear programming problems. An exact relaxation scheme based on Handelman's Theorem is proposed in order to obtain finite-dimensional feasibility problems. Several examples illustrate the approach.

\section*{Acknowledgments}

This work has been supported by the ACCESS and RICSNET projects, KTH, Stockholm, Sweden.

\bibliographystyle{IEEEtran}


\end{document}